\newtheorem{proposition}{Proposition}
\newcommand{\mbf}{\mathbf}
\newcommand{\Rset}{\mathbb R}
\newcommand{\vect}[1]{\boldsymbol{#1}}
\newcommand{\0}{\boldsymbol 0}
\newcommand{\f}{\mathbf f}
\newcommand{\bA}{\mathbf A}                               
\newcommand{\bmM}{\mathbf M}                               
\newcommand{\bK}{\mathbf K}                               
\newcommand{\bI}{\mathbf I}                               
\newcommand{\bn}{\boldsymbol n}
\newcommand{\bssigma}{\boldsymbol \sigma}
\newcommand{\bsSigma}{\boldsymbol \Sigma}
\newcommand{\bmb}{\mathbf b}
\newcommand{\bmB}{\mathbf B}
\newcommand{\bmt}{\mathbf t}
\newcommand{\bmI}{\mathbf I}
\newcommand{\bmx}{\mathbf x}
\newcommand{\bt}{\boldsymbol t}
\newcommand{\bsmu}{\mathbf u}
\newcommand{\bu}{\boldsymbol u}
\newcommand{\bx}{\boldsymbol x}
\newcommand{\bv}{\boldsymbol v}
\newcommand{\bveps}{\boldsymbol \varepsilon}
\newcommand{\dd}{\mathrm{d}}
\newcommand{\red}[1]{#1}
\newcommand{\mc}[1]{\multicolumn{1}{|c|}{#1}}
\newcommand{\tmI}{\tilde{\mathbf I}}
\newcommand{\A}{ \mathlarger{\mathlarger{\mathlarger{\boldsymbol{\mathsf{A}}}}} }
\newcommand{\argmin}{ \mathrm{argmin}}
\def\dynscriptsize{\check@mathfonts\fontsize{\sf@size}{\z@}\selectfont}
\def\textunderset#1#2{\leavevmode
  \vtop{\offinterlineskip\halign{%
    \hfil##\hfil\cr\strut#2\cr\noalign{\kern-.3ex}
    \hidewidth\dynscriptsize\strut#1\hidewidth\cr}}}
\newcommand{\tr}{\mathrm{tr}}
\begin{document}

\title{\textbf{\Large{Non-regularised Inverse Finite Element Analysis for 3D Traction Force Microscopy}}}

\author{Jos\'e J Mu\~noz}


%

\maketitle

\centerline{Dept. Mathematics, Laboratori de C\`alcul Num\`eric (LaC\`aN)}
\centerline{Universitat Polit\`ecnica de Catalunya (UPC), Barcelona, Spain}
\centerline{\tt j.munoz@upc.edu, http://www.lacan.upc.es/munoz}

\vspace{2pc}
\textbf{Keywords:} Inverse analysis, linear elasticity, finite elements, three-dimensional traction force microscopy

\abstract{
The tractions that cells exert on a gel substrate from the observed displacements is an increasingly attractive and valuable information in biomedical experiments. The computation of these tractions requires in general the solution of an inverse problem. Here, we resort to the discretisation with finite elements of  the associated direct variational formulation, and solve the inverse analysis using a least square approach. This strategy requires the minimisation of an error functional, which is usually regularised in order to obtain a stable system of equations with a unique solution. In this paper we show that for many common three-dimensional geometries, meshes and loading conditions, this regularisation is unnecessary. In these cases, the computational cost of the inverse problem becomes equivalent to a direct finite element problem. For the non-regularised functional, we deduce the necessary and sufficient conditions that the dimensions of the interpolated displacement and traction fields must preserve in order to  exactly satisfy or yield a unique solution of the discrete equilibrium equations. We apply the theoretical results to some illustrative examples and to real experimental data. Due to the relevance of the results for biologists and modellers, the article concludes with some practical rules that the finite element discretisation must satisfy.}


\section{Introduction}

The development of computational methods that allow scientists to accurately quantify the forces that cells exert on their surrounding has attracted a large amount of research \cite{butler02,franck11,schwarz02,toyjanova14,brask15}, which can be also found in recent review articles \cite{schwarz15}. These methods are currently being used to elucidate the proteins that control the mechanical response of cells when undergoing embryo morphogenesis, wound closure or cancer growth, to name a few \cite{brugues14,trepat09}. 

Some of the experimental techniques that were originally developed to measure the cellular tractions used  micromachined substrates \cite{galbraith97}, microneedles, or micro-pilars \cite{roure05}. Nowadays, the most popular methodology is to compute the cell tractions from the measured cell velocities and displacements on a polyacrylamide gel substrate. In some cases, this gel is partially covered by a membrane of polydimethylsiloxane (PDMS) that surrounds the cell monolayer in order to control the initial conditions of the cell migration. The idea of indirectly retrieving the cell tractions from the substrate deformations is founded on the seminal work of Harris et al. \cite{harris80}, and was later experimentally implemented in two \cite{dembo96} and three dimensions \cite{dembo99}. These techniques have been experimentally improved by Toyjanova et al. \cite{toyjanova14} in order to increase the accuracy of the measurements. Figure \ref{f:gel} illustrates the set-up considered in the present paper, where the deformation $\bu_0$ at the top surface of an assumed elastic gel is measured, and the traction field $\bt$ obtained indirectly.

\begin{figure}[!htb]
\centerline{\includegraphics[width=0.95\textwidth]{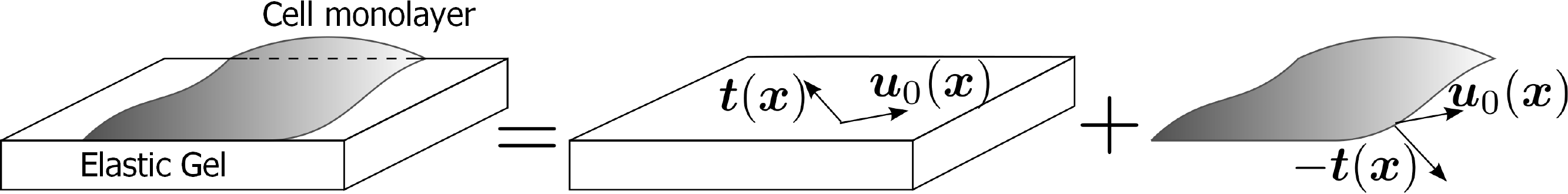}}
\caption{ General set-up in Traction Force Microscopy (TFM). A displacement field $\bu_0$ is imposed by the cell monolayer on the top of an assumed elastic gel is measured, and the traction field $\bt$ computed by solving an inverse elasticity  problem. }
\label{f:gel}
\end{figure}

Computationally, retrieving the tractions $\bt$ exerted by the cells from the measured displacements $\bu_0$ requires the solution of an inverse elasticity problem. In the present paper we analyse the finite element discretisation of this inverse problem. The use of finite elements in inverse analysis is a common practice in scattering problems \cite{beilina14}, localisation of pollutant sources \cite{deng13}, \red{estimation of Robin coefficients \cite{jin10},} or in elasticity problems \cite{ambrosi06,vitale12b}.  So far, the construction of well-posed inverse problems is ensured by resorting to Tikhonov regularisation \cite{ambrosi06,ramm05,schwarz02,vitale12b}, which depends on a penalty parameter. The optimal value of this parameter, which compromises the accuracy of the equilibrium conditions and the condition number of the system of equations has been studied for instance in \cite{hansen00}. We here determine the conditions that give rise to a well-posed discretised inverse elasticity problem in the absence of regularisation. We focus our attention on finite element (FE) discretisations of some commonly employed configurations in Traction Force Microscopy (TFM), also known as Cell Traction Microscopy \cite{vitale12}. We show that the regularisation process is in fact unnecessary, or it can be circumvented by modifying the domain discretisation.

The paper is organised as follows. In Section \ref{s:cont} we present the continuous direct and inverse problems. Section \ref{s:disc} describes the discrete versions of these two problems, and analyses the uniqueness of the solution in the inverse problem according to the dimensions of the discrete traction and displacement fields. Section \ref{s:res} applies the methodology to a toy problem that illustrates the main theoretical results, and to a problem with real experimental data.


\section{Continuous problem in linear elasticity}\label{s:cont}

\subsection{Continuous direct problem}\label{s:cf}

We consider an open connected domain $\Omega\subset\Rset^3$ subjected to homogeneous displacement conditions at a Dirichlet boundary $\Gamma_d\neq\emptyset$ and to surface loads $\bt$ on a Neumann boundary $\Gamma_n$, with $\bar\Gamma_d\cap\bar \Gamma_n=\emptyset$ and with the boundary $\partial\Omega=\overline{\Gamma_d\cup\Gamma_n}$ which is Lipschitz-continuous. The material in $\Omega$ is assumed to obey a linear elastic constitutive law with Lam\'e coefficients $\lambda>0$ and $\mu>0$, which are not necessarily constant in the domain $\Omega$. After neglecting the volumetric forces, the strong form of linear elasticity may be stated as the following boundary value problem \cite{ciarlet88}: 
\begin{eqnarray}
\vect\nabla\cdot\bssigma(\vect u)=\0, &\ \forall \bx\in\mathrm{int}(\Omega),\label{sf1}\\
\bssigma(\bu)\bn=\bt, &\ \forall \bx\in\Gamma_n \label{sf2},\\
\bu=\0, &\ \forall \bx\in\Gamma_d, \label{sf3}
\end{eqnarray}
with $\bssigma(\bu)=\lambda(\nabla\cdot\bu)\mathbf I +\mu(\vect\nabla\bu+\vect\nabla\bu^T)$ denoting the stress tensor. The traction field may contain discontinuities, but it is assumed that $\bt\in T\subseteq L^2(\Gamma_n)$. We point out that the strong form (\ref{sf1})-(\ref{sf3}) and the subsequent results are valid for homogeneous and non-homogeneous problems. \red{Indeed, linear anisotropic or non-homogeneous materials may be handled by resorting to the necessary alternative stress-strain relationships or by just using position  dependent material properties.} In fact, the latter case has motivated the present article. \red{These changes will only affect the computation of the matrices that will be presented in the finite element discretisation, but do not modify the methodology and theoretical results of this article. }

Let us define the following spaces $U$ and $V$,
\begin{eqnarray*}
U=V=(H^1_0(\Omega))^3=\left\{\vect v\in(H^1(\Omega)){}^3 : v_i=0\ \mbox{on}\ \Gamma_d, i=1,2,3\right\},
\end{eqnarray*}
which are endowed with the scalar product $(\vect u,\vect v)=\int_\Omega \vect u\cdot\vect v \mathrm d\Omega $, and equipped with the norm $||\vect v||_\Omega=\left(\sum_{i=1}^3 ||v_i||_{1,\Omega}^2\right)^{1/2}$, where $||v_i||^2_{1,\Omega}=\int_{\Omega} (|v_i|^2+\vect \nabla v_i\cdot\vect\nabla v_i)\mathrm{d}\Omega$. After multiplying by a trial function $\vect v\in V$ the first equation in (\ref{sf1})-(\ref{sf3}), integrating on the domain $\Omega$, integration by parts, and using the boundary conditions, the weak form of problem (\ref{sf1})-(\ref{sf3}) reads \cite{ciarlet88}:
\begin{eqnarray}\label{wf}
\mbox{Find}\ \bu\in U\ \mbox{s.t.}\ a(\bu,\bv)=b(\bv), &\ \forall \bv\in V.
\end{eqnarray}

 The bilinear and linear forms $a(\cdot,\cdot)$ and $b(\cdot)$ are given by,
\begin{eqnarray*}
a(\bu, \bv)&:=&\int_\Omega\bssigma(\bu):\bveps(\bv)\mathrm{d}\Omega, \\
b(\bv)&:=&\int_{\Gamma_n}\bv\cdot\bt\mathrm{d}\Gamma, 
\end{eqnarray*}
where $\bveps(\vect v)=\frac{1}{2}\left(\vect \nabla\bv+(\vect\nabla\bv)^T\right)$ is the small strain tensor.
Since the bilinear form  $a(\cdot,\cdot)$ is continuous and coercive (or $V-$ elliptic) with respect to the space $V$ (see \cite{ciarlet88}, Section 1.2), and we assume that $\Gamma_d\neq\emptyset$, the weak form in (\ref{wf}) accepts only one solution \cite{ciarlet88}. We will denote by $\bu[\bt]$ the solution of problem (\ref{wf}) for a given boundary load $\vect t$. 

In the subsequent paragraphs we will in fact consider a partitioning of the domain $\Omega$ into domains $\Omega_0\subseteq\Omega$ and $\Omega_1=\Omega\backslash\Omega_0$. In sub-domain $\Omega_1$ we assume a \emph{given} displacement field $\bu_1$ that satisfies the elasticity equations,
\begin{eqnarray}
\vect\nabla\cdot\bssigma(\vect u_1)=\0, &\ \forall \bx\in\mathrm{int}(\Omega_1),\nonumber\\
\bssigma(\bu_1)\bn=\bt, &\ \forall \bx\in\Gamma_n\cap\Omega_1 \label{sfp2},\\
\bu_1=\0, &\ \forall \bx\in\Gamma_d\cap\Omega_1. \nonumber
\end{eqnarray}

We will then denote by $\bu[\vect t, \bu_1]$, the solution $\bu_0$ that satisfies the elasticity problem in $\Omega_0$ compatible with $\bu_1$ and the boundary conditions in (\ref{sf2})-(\ref{sf3}),
\begin{eqnarray}
\vect\nabla\cdot\bssigma(\bu_0)=\0, &\ \forall \bx\in\mathrm{int}(\Omega_0),\nonumber\\
\bssigma(\bu_0)\bn=\bt, &\ \forall \bx\in\Gamma_n\cap\Omega_0 \label{sfp1},\\
\bu_0=\0, &\ \forall \bx\in\Gamma_d\cap\Omega_0, \nonumber\\
\bu_0=\bu_1, &\ \forall \bx\in\partial\Omega_0\backslash(\Gamma_n\cup\Gamma_d),\nonumber
\end{eqnarray}

It is important to stress that problem (\ref{sfp1}) aims to find an unknown displacement $\bu_0$, while 
equations (\ref{sfp2}) just give some conditions on the known displacement $\bu_1$. \red{The tractions $\bt$ and displacements $\bu_1$ are known, and $\bu_1$ satisfies the equilibrium equations, so that problem \eqref{sfp2}-\eqref{sfp1} accepts only one solution $(\bu_1, \bu_2)$. This direct problem has no practical interest, but it is used here to ease the presentation of the inverse problem in the next subsection.}

\subsection{Continuous inverse problem}\label{s:ci}

The continuous inverse problem of (\ref{sfp1})-(\ref{sfp2}) consists on assuming instead the knowledge of displacements $\bu_0$, and finding the traction and displacements fields, $\bt$ and $\bu_1$ respectively. Formally, it is stated as,
\begin{eqnarray}\label{eq:if}
\mbox{Given}\  \vect u_0\in U_0(\Omega_0), \mbox{find}\ \bt\in T\subseteq L^2(\Gamma_n) \ \mbox{and}\ \bu_1\in U_1(\Omega_1)\ \mbox{s.t.}\nonumber \\
\phantom{Put it on the right}  \bar b(\bt,\bv)=\bar a(\bu_1,\bv)+\bar c(\bv),\ \forall \bv\in V,
\end{eqnarray}
where the forms $\bar a(\bu_1, \bv)$, $\bar b(\bt,\bv)$ and $\bar c(\bv)$ are given by,
\begin{eqnarray*}
\bar a(\bu_1, \bv)     &:=&\int_{\Omega_1}\bssigma(\bu_1):\bveps(\bv) \mathrm{d}\Omega,\\
\bar b(\bt, \bv)&:=&\int_{\Gamma_n}\bv\cdot\bt\mathrm{d}\Gamma,\\
\bar c(\bv)     &:=&\int_{\Omega_0}\bssigma(\bu_0):\bveps(\bv) \mathrm{d}\Omega.
\end{eqnarray*}

\red{Domain $\Omega_0$ contains the location of the points where $\vect u_0$ is measured. Although it is possible to experimentally measure displacements fields at the interior of tissues or organs, in our examples in Section \ref{s:res}, domain $\Omega_0$ will be limited to the top boundary of the gel, in contact with the cell monolayer (see Figure \ref{f:gel}), while $\Omega_1$ is the interior of the gel. The unknown $\bt$ will correspond in this case to the tractions exerted by the cells on the top of the gel.}

In general, the existence and uniqueness of the solution of (\ref{eq:if}) cannot be guaranteed. This is partially due to the fact that the measured displacements $\bu_0$ may not be a solution of a linear elastic problem, due to the non-linearities of the substrate or to experimental errors. For instance, if $\nabla\cdot\bssigma(\vect u_0)\neq \mathbf 0$ somewhere in $\mathrm{int}(\Omega_0)$, then no traction field satisfying (\ref{eq:if}) can be found. If instead $\nabla\cdot\bssigma(\vect u_0)=\mathbf  0$ everywhere in $\Omega_0$, the choice $\vect t=\bssigma(\vect u_0)\vect n\Big\vert_{\Gamma_n\cap\Omega_0}$ and $\bu_1$ the solution of the elasticity problem in (\ref{sfp2}) is a solution of the inverse problem. Since we do not impose any conditions on the measurements $\vect u_0$, the solvability of (\ref{eq:if}) cannot be ensured. The methodology presented in this paper aims to find a traction and displacement field that solves a discrete version of the inverse problem, and if no solution exists, minimises the error $\bar a(\bu_1, \vect v)+\bar c(\bv)-\bar b(\vect t, \vect v)$ for arbitrary test functions $\vect v$. 

Furthermore, $\bu_0$ is in practice only retrieved on a set of $n_0$ discrete points $X=\{\bx_1, \ldots, \bx_{n_0}\}$ of $\Omega_0$. We denote by $O$ the operator that extracts the values of a  continuous field $\bu_0$ on the set $X$, that is, $O\bu_0=\{\bu_0(\vect x_1), \ldots, \bu_0(\vect x_{n_0})\}$. The inverse problem in (\ref{eq:if}) is then modified by defining the following functional,
\begin{eqnarray}\label{J0}
\tilde J_0(\bt, \bu_1):=||O\bu[\vect t, \bu_1]-O\bu_0||^2,
\end{eqnarray}
with $||\bullet||$ the standard Euclidean norm in $\Rset^{3\times n_0}$, and solving the following minimisation problem:
\begin{eqnarray}\label{mJ0}
(\bt^*, \bu_1^*)=\mbox{\textunderset{$\bt,\bu_1$}{$\argmin$}}\ \tilde J_0(\bt, \bu_1).
\end{eqnarray}

We note that this minimisation problem differs from other inverse problems which also consider partial knowledge of the field $\bu_0$ \cite{ambrosi06,salsa08}. The aim in these works is to minimise the following regularised functional :
\begin{eqnarray}\label{Je}
\tilde J_\epsilon(\bt):=||O\bu[\bt]-O\bu_0||^2 + \epsilon||\bt||^2_{2,\Gamma_n},\ \epsilon>0,
\end{eqnarray}
with $||\bullet||_{2,\Gamma_n}$ the $L^2$-norm in $\Gamma_n$. Our functional in (\ref{J0}) is instead non-regularised, that is, $\tilde J_0(\bt, \bu_1)$ in (\ref{J0}) does not include the term $\epsilon||\bt||^2_{2,\Gamma_n}$. This term is needed in order to ensure the coercivity of the penalty functional $\tilde J_\epsilon(\bt)$, and therefore guarantee the uniqueness of the optimum $\bt^*$ (see for instance \cite{salsa08}, Section 8.9, for a proof). If this term is not included, the minimisation problem may become ill-posed, and the solution of its discrete form may require the computation of a pseudo-inverse matrix, which may become computationally prohibitive. However, in the functional defined in (\ref{Je}), the value of the parameter $\epsilon$ needs to be chosen appropriately \cite{lions68}. The larger the value of $\epsilon$, the larger the error in the equilibrium equations in (\ref{wf}) becomes, while for very small values of $\epsilon$, the regularised problem may become ill-conditioned  \cite{hansen00}. In the next section, instead of considering the regularisation of the problem in  (\ref{J0})-(\ref{mJ0}), we will analyse the discrete form of the inverse problem and study the need for such regularisation.



\section{Finite Element discretisation}\label{s:disc}

\subsection{Discrete direct problem}\label{s:df}

Let us consider a finite element discretisation of the weak form in (\ref{wf}). We discretise the domain $\Omega\subset\Rset^3$ with a structured or unstructured mesh $M$ using $N_E$ non-overlapping conformal hexahedral elements $K_1, \ldots, K_{N_E}$ and $N$ nodes $\vect x_i\in\Omega, i=1,\ldots, N$. Elements $K_e, e=1,\ldots, N_E$ are such that \cite{beilina14, ciarlet88},
\begin{equation*}
\bar\Omega=K_1\cup K_2\ldots\cup K_E, \quad \mathrm{int}(K_e)\cap \mathrm{int}(K_{e'})=\emptyset, \forall e\neq e'.
\end{equation*}

 Let us define the polynomials $Q_r(K_e)$ on an element $K_e$ as
\begin{eqnarray*}
Q_r(K_e)=\{q: q(x,y,z)=\\
\quad\quad\quad\quad\quad\quad\sum_{0\le i,j,l\le r} c_{ijl} x^i y^jz^l, (x,y,z)\in K_e, c_{ijl}\in\Rset, \forall K_e\in M\}.
\end{eqnarray*}

In our numerical examples we will use the case $r=1$, which is tantamount  to using the following finite element spaces $U^h\subset U$ and $V^h\subset V$:
\begin{eqnarray*}
U^h=V^h=\{\vect v(\vect x)\in (H^1_0(\Omega))^3: \vect v\in C(\Omega), \vect v|_{K_e}\in Q_1(K_e),\ \forall K_e\in M\}.
\end{eqnarray*}

We are thus employing tri-linear hexahedral elements, although the results presented here are also valid for other element types and degrees.  After replacing in (\ref{wf}) the spaces $U$ and $V$ by $U^h$ and $V^h$, respectively, the discrete version of the weak form reads,
\begin{eqnarray}\label{wfh}
\mbox{Find}\ \bu^h\in U^h\ \mbox{s.t.}\ a(\bu^h,\bv^h)=b(\bv^h), &\ \forall \bv^h\in V^h.
\end{eqnarray}

The space of the traction field $T$ will be replaced by the set of piece-wise bi-linear tractions in $C^0(\Gamma_n)$:
\begin{eqnarray*}
T^h:=\{\vect t(\vect x)\in (L^2(\Gamma_n))^3: \vect t|_{\partial K_e}\in Q_1(\partial K_e), \forall \ \partial K_e\in \Gamma_n\}.
\end{eqnarray*}

The space $T^h$ is illustrated  in Figure \ref{f2}b, together with the nodally interpolated displacement field $\bu^h$. The use of the spaces defined above is equivalent to resorting to the following Lagrangian interpolation of the field $\vect u$ and traction field $\bt$,
\begin{eqnarray*}
\vect u\approx\vect u^h&=\sum_{\forall j, \vect x_j\not\in \Gamma_d} q_j(\vect x)\mbf u_j,\\
\bt\approx\bt^h& =\sum_{\forall j, \vect x_j\in\Gamma_n} \tilde q_j(\vect x)\bmt_j, 
\end{eqnarray*}
where the polynomials $q_j(\vect x)\in Q_1$ and $\tilde q_j(\vect x)$ form bases of the spaces $U^h$ and $T^h$ respectively, and  $\bsmu_j\in\Rset^3$  and $\bmt_j \in\Rset^3$ are  the displacement and traction vectors at node $j$. The solution of (\ref{wfh}) is then equivalent to solving the following system of equations \cite{hughes87}:
\begin{eqnarray}\label{eq:fp}
\bK\bsmu=\bA\bmt,
\end{eqnarray}
with $\bK$ the standard stiffness matrix and $\bA$ a matrix that projects the boundary loads on nodal contributions. Matrix $\bK$ is formed by block matrices $\bK_{ij}$ that couple nodes $i$ and $j$,
\begin{eqnarray*}
\mbf K_{ij}=\int_\Omega \left(\lambda \nabla q_i\nabla q_j^T+\mu((\nabla q_i^T\nabla q_j)\mbf I+\nabla q_j\nabla q_i^T)\right)\mathrm d\Omega,\  \vect x_i, \vect x_j\not\in \Gamma_d,\\
\end{eqnarray*}
while matrix $\bA$ adopts the following expression: 
\begin{eqnarray*}
\mbf A_{ij}=\mbf I\int_{\Gamma_n} \tr(q_i(\vect x))\tilde q_j(\vect x) \mathrm d \Gamma,\ \forall i,j, \vect x_i\not\in \Gamma_d, \vect x_j\in\Gamma_n,
\end{eqnarray*}
with $\tr(q_i(\vect x))$ the trace of function $q_i(\vect x)$ on the domain $\Gamma_n$. Vectors $\bsmu$ and $\bmt$ in (\ref{eq:fp}) gather the set of nodal displacements $\bsmu_j, \forall\vect x_j\not\in\Gamma_d$, and nodal values $\bmt_j$.

\begin{figure}[!htb]
\centerline{\subfigure[]{\includegraphics[width=0.5\textwidth]{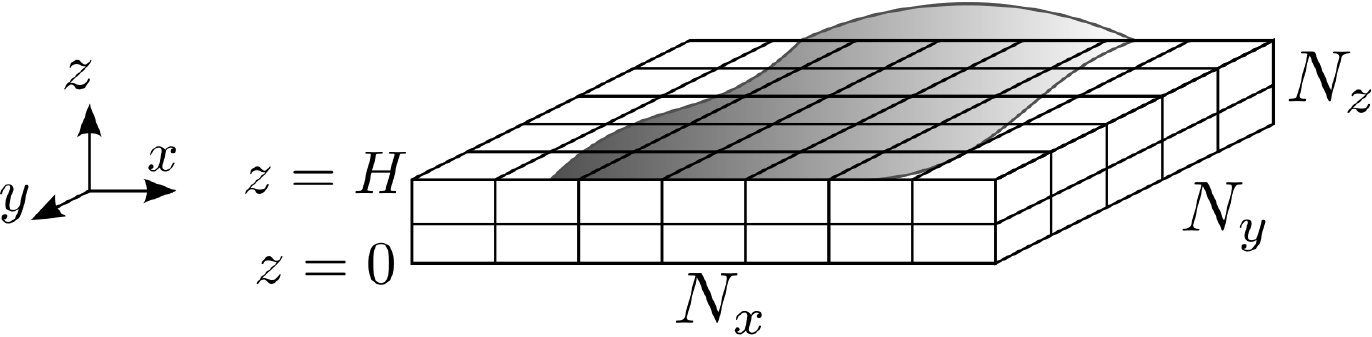}}\hspace{6ex}
\subfigure[]{\includegraphics[width=0.25\textwidth]{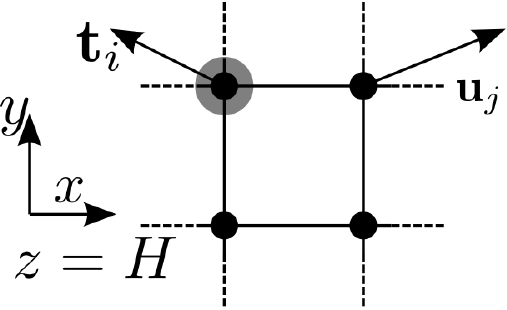}}}
\caption{a) Scheme of cell monolayer on gel substrate, discretised with a three-dimensional two-layered Cartesian mesh. b) Scheme of the discretisation on the top surface of the substrate, using nodal displacements and nodal tractions. Black circles and lines indicate respectively nodes and finite element mesh. The grey circle indicate the location where the traction field is defined.}
\label{f2}
\end{figure}

It will be convenient  to consider a modified matrix $\hat\bA$ and alternative loading consisting on a set of point loads $\hat\bmt_j, \forall \vect x_j\in\Gamma_n$:
\begin{eqnarray}\label{eq:hA}
\hat\bA_{ij}=\delta_{ij}\mathbf I,\ \forall i,j, \vect x_i\not\in\Gamma_d, \vect x_j\in\Gamma_n.\\
\hat\bmt_j=\int_{\Gamma_n} \tr(q_i(\vect x))\bt^h\mathrm d \Gamma,\ \forall j, \vect x_j\in\Gamma_n, \bt^h\in T^i.\label{eq:ht}
\end{eqnarray}
 
It can be verified that $\bA\bmt=\hat \bA\hat\bmt$, and therefore the product $\hat\bA\hat\bmt$ has  equivalent total nodal loads, but with simpler matrix components $\hat\bA_{ij}$. Furthermore, and in view of equation (\ref{eq:ht}), the relation between $\bmt$ and $\hat\bmt$ may be written as
\[
\hat\bmt=\bmM\bmt,
\]
with
\[
\bmM_{ij}=\mathbf I \int_{\Gamma_n} \tr(q_i(\vect x))\tilde q_j(\vect x)\mathrm d \Gamma, \forall i,j, \vect x_i, \vect x_j\in\Gamma_n.
\]

The symmetric matrix $\bmM_{ij}$ corresponds in fact to the mass matrix\red{, but with no density factor,} associated to the boundary $\Gamma_n$, and is thus invertible \cite{hughes87}. Therefore, the space $T^h$ can  be represented by either $\hat\bmt$ or $\bmt=\bmM^{-1}\hat\bmt$. In the former case, the system of equations of the discrete direct problem in (\ref{eq:fp}) takes the following form:
\begin{equation}\label{eq:fpb}
\bK\bsmu=\hat\bA\hat\bmt,
\end{equation}
with $\hat\bA=\bA\bmM^{-1}$ the matrix given in (\ref{eq:hA}). Since we assume that the  continuous problem in (\ref{wf}) has a unique solution, the discrete problem (\ref{wfh}) has also a unique solution \cite{ciarlet88}, that is, matrix $\bK$ is regular, and the system of equations in (\ref{eq:fp}) and in (\ref{eq:fpb}) accept the same unique solution $\bsmu$ for any loading $\bmt$ and $\hat\bmt=\bmM\bmt$. 

In agreement with the partitioning of domain $\Omega=\Omega_0\cup\Omega_1$ presented in Section \ref{s:cf}, with $\mathrm{int}(\Omega_0)\cap\mathrm{int}(\Omega_1)=\emptyset$, we will also consider the following decomposition of the discretised displacement field $U^h=U^h_0\oplus U^h_1$, where $U^h_0$ denotes the set of nodal measured displacements $\bsmu_0$, and $U^h_1$ is the set of nodal unknown displacements $\bsmu_1$, which cannot be experimentally measured. For instance, for the geometry depicted in Figure \ref{f2}a, $U^h_1$ may include the vertical displacements at the top surface, $\bsmu_z$, or in multilayered discretisations ($N_z>1$), the displacements at the intermediate layers of the gel $\bsmu_m$ at height $z$, $0<z<H$.

Let us introduce the notation $n_0=\dim(U^h_0)$, $n_1=\dim(U^h_1)$, $n=\dim(U^h)=n_0+n_1$,  and $m=\dim(T^h)$. According to the  partitioning of $U^h$, the system of equations corresponding to the direct problem reads:
\begin{equation}\label{eq:fp1}
\bK_0\bsmu_0 + \bK_1\bsmu_1=\bA\bmt,
\end{equation}
or equivalently,
\begin{equation}\label{eq:fp1b}
\bK_0\bsmu_0 + \bK_1\bsmu_1=\hat\bA\hat\bmt,
\end{equation}
with $\bK_0\in\Rset^{n\times n_0}$, $\bK_1\in\Rset^{n\times n_1}$, and $\bA,\hat\bA\in\Rset^{n\times m}$. 

We note that partial knowledge of $U$ has been also treated in \cite{ambrosi06,vitale12,vitale12b} by resorting to the adjoint problem  of the continuous problem \cite{lions68}, which is solved together with a regularised inverse problem. Here instead, we deal with this partially known displacement in the discretised problem.

In addition to the discrete equilibrium equations in (\ref{eq:fp}),  we could also impose  the  equilibrium conditions on the discrete traction field $\int_\Gamma \bmt^h d\Gamma = \mathbf 0$ and  $\int_\Gamma \bx\times\bmt^h d\Gamma = \mathbf 0$. These conditions are not considered here because in many cases, due to experimental limitations, the cell monolayer is not  isolated, and just a subset of the whole cellular system can be analysed. In this case,  the boundary $\Gamma_n$ includes external forces that other cells not in $\Omega$ exert.

\subsection{Discrete inverse problem}\label{s:di}

The discretised form of the functional $\tilde J_0(\bt, \bu_1)$ in (\ref{J0}) reads:
\begin{equation}\label{eq:Jt0h}
\tilde J_0^h(\bmt, \bsmu_1):=||\bK_0^{-1}\left(\bA\bmt-\bK_1\bsmu_1\right)-\bsmu_0||^2.
\end{equation}

The minimisation problem $\min_{\bmt, \bsmu_1} \tilde J_0^h(\bmt, \bsmu_1)$ would give rise to a system of equations that requires the computation of $\bK_ 0^{-1}$. For this reason, we will instead use the following functional,
\begin{eqnarray}\label{J0h}
J_0^h(\bmt, \bsmu_1):=||\bK_0\bsmu_0+\bK_1\bsmu_1-\bA\bmt||^2,
\end{eqnarray}
which may be interpreted as the functional in (\ref{eq:Jt0h}) but using a different metric of the vector space. 
The minimisation problem $\min_{\bmt,\bsmu_1} J_{0}^h(\bmt,\bsmu_1)$ gives now rise to the following normal equations:
\begin{eqnarray}\label{eq:ne2}
\left[\begin{array}{cc}
\bA & -\bK_1\end{array}\right]^T
\left[\begin{array}{cc}
\bA & -\bK_1\end{array}\right]
\left\{\begin{array}{c}
\bmt\\
\bsmu_1
\end{array}\right\}
=\left[\begin{array}{cc}
\bA & -\bK_1\end{array}\right]^T
\bK_0\bsmu_0.
\end{eqnarray} 

It will become convenient to rewrite this system as the solution of variables $\bmt$ and $\bsmu_1$ in a partitioned manner. By using the relation $\bA=\hat\bA\bmM$, and the fact that the mass matrix $\bmM$ is positive definite and thus invertible, the system of equations in (\ref{eq:ne2}) can be rewritten as,
\begin{eqnarray}
\bK_1^T\tmI\bK_1\bsmu_1 &=&-\bK_1^T\tmI\bK_0\bsmu_0, \label{eq:uzp}\\
\phantom{\bK_1^T\tmI\bK_1}\bmt&=&\phantom{-}\bmM^{-1}(\hat\bA^T\hat\bA)^\dagger\hat\bA^T(\bK_0\bsmu_0+\bK_1\bsmu_1), \label{eq:uzbp}
\end{eqnarray}
where $\tmI=\mathbf I-\hat\bA(\hat\bA^T\hat\bA)^\dagger\hat\bA^T$, and $(\hat\bA^T\hat\bA)^\dagger$ denotes the pseudo-inverse of $\hat\bA^T\hat\bA$,  which is equal to the inverse of $\hat\bA^T\hat\bA$ when this matrix is invertible. This new form allows us to compute $\bsmu_1$ from equation (\ref{eq:uzp}), and  then obtain the traction field $\bmt$ using equation (\ref{eq:uzbp}).

The form in (\ref{eq:uzp})-(\ref{eq:uzbp}) is clearly more convenient because requires to solve only the system in (\ref{eq:uzp}). It will be also shown in the numerical results that the condition number of matrix $\bK_1^T\tmI\bK_1$ is much lower than the matrix of the system in (\ref{eq:ne2}). The next proposition analyses the uniqueness of the solution in the normal equations (\ref{eq:ne2}) or (\ref{eq:uzp})-(\ref{eq:uzbp}), and also determines when the computation of the pseudo-inverse is necessary.

\begin{proposition}\label{p1}
\begin{itemize}
\item[\emph{i)}]The vectors of nodal tractions $\bmt\in \Rset^m$ and displacements $\bsmu_1\in\Rset^{n_1}$ that satisfy the system of equations in (\ref{eq:ne2}) are unique if and only if $m \le n_0$.
\item[\emph{ii)}] If $m=n_0$, the optimal solution $(\bmt^*,\bsmu^*)$ satisfies $J^h_{0}(\bmt^*, \bsmu^*_1)=0$.
\end{itemize}
\end{proposition}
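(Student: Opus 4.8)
\medskip
\noindent\textbf{Plan of proof.}

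The plan is to read the normal equations (\ref{eq:ne2}) as those of a linear least squares problem and to reduce both parts to a single rank condition. Put $\bB:=\left[\begin{array}{cc}\bA & -\bK_1\end{array}\right]\in\Rset^{n\times(m+n_1)}$, $\vect z:=\left\{\begin{array}{c}\bmt\\ \bsmu_1\end{array}\right\}$ and $\vect c:=\bK_0\bsmu_0$, so that (\ref{eq:ne2}) reads $\bB^T\bB\,\vect z=\bB^T\vect c$ and $J_0^h(\bmt,\bsmu_1)=\|\bB\vect z-\vect c\|^2$. Being the stationarity condition of this quadratic, (\ref{eq:ne2}) is always solvable, and $\bB^T\bB\,\vect z=\0$ forces $\|\bB\vect z\|=0$, so its solution set is a coset of $\ker\bB$; hence the solution $(\bmt,\bsmu_1)$ is \emph{unique} if and only if $\bB$ has full column rank $m+n_1$. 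The necessity of $m\le n_0$ in~i) is then immediate, since $\mathrm{rank}(\bB)\le n=n_0+n_1$ forces $m+n_1\le n_0+n_1$. Part~ii) will follow from~i): for $m=n_0$ the matrix $\bB$ is square of order $n$ and, by~i), of full column rank, hence invertible, so $\vect c\in\mathrm{range}(\bB)$ and $J_0^h$ attains its minimum $0$ at $\vect z=\bB^{-1}\vect c$; equivalently $(\bmt^*,\bsmu_1^*)$ satisfies the discrete equilibrium (\ref{eq:fp1}) exactly.

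For the sufficiency in~i) I would exploit the structure of the finite element matrices. Partition the rows of the regular matrix $\bK$ in the same way as its columns, according to $U^h=U^h_0\oplus U^h_1$, and write $\bK_1=\left[\begin{array}{c}\bK_{01}\\ \bK_{11}\end{array}\right]$, $\bA=\left[\begin{array}{c}\bA_0\\ \bA_1\end{array}\right]$ along this partition. Then $\bK_{11}$, a principal submatrix of the symmetric positive definite $\bK$, is invertible; $\bK_1$ has full column rank $n_1$, its columns being $n_1$ of the $n$ independent columns of $\bK$; and $\bA$ has full column rank $m$, since $\bA=\hat\bA\bmM$ with $\bmM$ invertible and $\hat\bA$ the full-rank selection matrix of (\ref{eq:hA}). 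A block elimination of the second block column of $\bB$ by means of $\bK_{11}^{-1}$ then gives $\mathrm{rank}(\bB)=n_1+\mathrm{rank}\!\left(\bA_0-\bK_{01}\bK_{11}^{-1}\bA_1\right)$, and the Schur complement identity for $\bK^{-1}$ yields $\bA_0-\bK_{01}\bK_{11}^{-1}\bA_1=\vect S\,\vect C$, where $\vect S$ is the invertible Schur complement of $\bK_{11}$ in $\bK$ and $\vect C$ is the matrix of the linear map $\bmt\mapsto\bsmu_0[\bmt]$ assigning to a nodal traction the measured components of the unique solution of the direct problem (\ref{eq:fp}). Thus $\bB$ has full column rank $m+n_1$ exactly when this traction-to-measured-displacement map is injective, and it remains to show that $m\le n_0$ makes it so.

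I expect this last step to be the real obstacle, since it is here that the geometry and loading must intervene. A clean argument is available whenever every node carrying an unknown traction also carries a measured displacement --- which is the situation in the TFM meshes of Section~\ref{s:res}, and which forces $m\le n_0$ anyway: for $\bsmu[\bmt]=\bK^{-1}\bA\bmt$ one has $\bmt^T\bA^T\bsmu[\bmt]=\bmt^T\bA^T\bK^{-1}\bA\bmt=a\!\left(\bsmu[\bmt],\bsmu[\bmt]\right)$, which is strictly positive as soon as $\bA\bmt\neq\0$; since $\bA$ has full column rank, $\bmt\mapsto\bA^T\bsmu[\bmt]=\bmM\,\bsmu[\bmt]|_{\Gamma_n}$ is injective, hence so is the restriction of $\bsmu[\bmt]$ to the nodes of $\Gamma_n$, and if those degrees of freedom all lie in $U^h_0$ the map $\bmt\mapsto\bsmu_0[\bmt]$ is injective, $\ker\bB=\{\0\}$, and~i) holds. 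In the general case the same conclusion amounts to the nonsingularity, when $m\le n_0$, of the Schur complement block $\bA_0-\bK_{01}\bK_{11}^{-1}\bA_1$, which is the point at which I would appeal to the precise structure of the mesh and load anticipated in the abstract. Given~i), part~ii) is obtained as in the first paragraph.
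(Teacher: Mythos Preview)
Your proposal is correct and takes a genuinely different route from the paper's. The paper works throughout with the eliminated system (\ref{eq:uzp})--(\ref{eq:uzbp}) and the projector $\tmI=\bI-\hat\bA(\hat\bA^T\hat\bA)^\dagger\hat\bA^T$: when $m\le n$ one has $\hat\bA^T\hat\bA=\bI$, so $\tmI\bK_1$ is $\bK_1$ with the $m$ rows at the traction dofs set to zero, and the \emph{if} direction is argued by observing that these zeroed rows lie in the $\bsmu_0$ block, so the invertible principal submatrix $\bK_{11}$ survives and $\ker(\tmI\bK_1)=\{0\}$. The \emph{only if} direction is done by a two-case rank count ($m>n$ versus $n_0<m\le n$), and part ii) by substituting (\ref{eq:uzbp}) back into the residual. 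Your reformulation via $\bB=[\bA\ -\bK_1]$ collapses the \emph{only if} direction to the single line $\mathrm{rank}\,\bB\le n$, and makes ii) an immediate corollary of i) (square $+$ full column rank $\Rightarrow$ invertible). For sufficiency, your Schur-complement reduction to the injectivity of $\bmt\mapsto\bsmu_0[\bmt]$ together with the energy argument $\bmt^T\bA^T\bK^{-1}\bA\bmt=a(\bsmu[\bmt],\bsmu[\bmt])>0$ is more conceptual than the paper's row-counting and makes explicit what is actually being used.

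Both arguments rest on the same structural hypothesis, which you isolate and the paper invokes without proof in the sentence ``Since $m\le n_0$, $\bmt$ is applied onto nodes where $\bsmu$ is known, i.e.\ a subset of $\bsmu_0$.'' The bare inequality $m\le n_0$ does not by itself force the traction dofs to sit inside the measured dofs; it is the TFM convention $\Gamma_n\subset\Omega_0$ (tractions live where displacements are observed, cf.\ Figures~\ref{f:gel}--\ref{f2} and the cases in Table~\ref{t}) that does. Under that standing assumption your energy argument closes the proof; your remark that this is ``the point at which I would appeal to the precise structure of the mesh and load'' is exactly the level of generality at which the paper's own proof operates, so you are not missing anything the paper supplies.
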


\begin{proof}
\begin{itemize}
\item[\emph{i)}] \emph{Only if} implication. We will show that when $m>n_0$, the solution is not unique. We will distinguish two situations:

$m>n_1+n_0$: Matrix $\bA\in\Rset^{n\times m}$ is rectangular with $m>n$, and therefore $\dim(\ker(\bA))=\dim(\ker(\bA^T\bA))\ge m-n>0$. Consequently, either (\ref{eq:ne2}) has no solution, or if $\bmt^*$ is a solution of (\ref{eq:ne2}), any solution with the form $\bmt^*+\alpha\bmt_0$, $\alpha\neq 0$, with $\bmt_0\in\ker(\bA)$,  will also be a solution of (\ref{eq:ne2}). 

$n_0+n_1\ge m>n_0$: Matrix $\bK_1\in\Rset^{n\times n_1}$ stems from a FE discretisation of a linear elastic problem, with its columns associated to the displacements $\bsmu_1\in\Rset^{n_1}$. Since the non-discretised problem in (\ref{sf1})-(\ref{sf3}) is well-posed, matrix $\bK_1$ is full-rank and thus $\dim(\ker(\bK_1))=0$. Also, when $m\le n$, we have that $\hat\bA^T\hat\bA=(\hat\bA^T\hat\bA)^\dagger=\mathbf I\in\Rset^{m\times m}$, and therefore, $\tmI=\mathbf I-\hat\bA\hat\bA^T\in\Rset^{n\times n}$ is an identity matrix with $m$ diagonal components equal to zero. The product $\tmI\bK_1$ is then equal to matrix $\bK_1$, but with $m$ rows being equal to $0$. If $m>n_0$, then $\dim(\ker(\tmI\bK_1))\ge m-n_0>0$. This implies that $\tmI\bK_1$ is rank-deficient and that the system of equations in (\ref{eq:uzp}) has no solution or accepts more than one solution. In the latter case, by using equation (\ref{eq:uzbp}) for any of these multiple solutions, we obtain in turn multiple traction vectors $\bmt$.

\emph{If} implication. As before, $\dim(\ker(\bK_1))=0$, and matrix $\tmI\bK_1$ is equal to matrix $\bK_1$ with $m$ rows being replaced by $0$. The $m$ rows correspond to those degrees of freedom (dofs) where the tractions are applied. Since $m\le n_0$, $\bmt$ is applied onto nodes where $\bsmu$ is known, i.e. a subset of $\bsmu_0$. Hence, $\dim(\ker(\tmI\bK_1))=\dim(\ker(\bK_1))=0$. It follows that the  system of equations in (\ref{eq:uzp}) has full-rank, and the solution $\bsmu_1$ is unique. The optimal traction field $\bmt$ is also unique after inserting $\bsmu_1$ into (\ref{eq:uzbp}). 
\item[\emph{ii)}] When $m=n_0$, we have that $\hat\bA\hat\bA^T=\mathbf I\in\Rset^{n\times n}$. Then, by inserting the expression of $\bmt$ in (\ref{eq:uzbp}) into
\[
\bK_0\bsmu_0+\bK_1\bsmu_1-\bA\bmt
\]
and using the relation $\bA=\hat\bA\bmM$, it can be verified that the expression above vanishes, and therefore, the functional $J^h_{0}(\bmt, \bsmu_1)$ defined in (\ref{J0h}) also vanishes.
\end{itemize}
\end{proof}

In view of Proposition \ref{p1} and its proof, we can conclude that when $m\le n_0$, the normal equations in (\ref{eq:uzp})-(\ref{eq:uzbp}) take the following simpler form:
\begin{eqnarray}
\bK_1^T\tmI_0\bK_1\bsmu_1 &=&-\bK_1^T\tmI_0\bK_0\bsmu_0, \label{eq:uz}\\
\phantom{\bK_1^T\tmI_0\bK_1}\bmt&=&\phantom{-}\bmM^{-1}\hat\bA^T(\bK_0\bsmu_0+\bK_1\bsmu_1), \label{eq:uzb}
\end{eqnarray}
with $\tmI_0=\bmI-\hat\bA\hat\bA^T$. If instead $m>n_0$, the inverse problem may be solved by resorting to the pseudo-inverse $(\hat\bA^T\hat\bA)^\dagger$ and the normal equations in (\ref{eq:uzp})-(\ref{eq:uzbp}), or alternatively, by using a regularised functional such as
\begin{eqnarray}\label{Jhe}
\tilde J^h_\epsilon(\bmt, \bsmu_1):=J_0^h(\bmt, \bsmu_1) +  \epsilon||\bmt||^2_2,\ \epsilon>0.
\end{eqnarray}

We note that Proposition \ref{p1} includes also the trivial case $n_1=0$, which is obtained by removing the components of vector $\bsmu_1$ and matrix $\bK_1$ in equation (\ref{eq:ne2}). The traction is then obtained from (\ref{eq:uzb}) simply as $\bmt=\bmM^{-1}\hat\bA^T\bK_0\bsmu_0$. However, the situation $n_1=0$ has no practical interest, since in general it furnishes a too inaccurate  solution of the discrete inverse problem.
 
\subsection{Condition of discrete inverse problem}\label{s:cond}

Although Proposition 1 allows to determine the conditions that ensure a unique solution, nothing has been said about the conditioning of the system of equations. In the trivial case $n_1=0$, the system in (\ref{eq:ne2}) reduces to equation (\ref{eq:uzb}), so that the conditioning of the system of equations is equivalent to the one of a standard direct FE problem. When $n_1>0$ and $m\le n_0$, the optimal values of the inverse problem require the solution of the system in (\ref{eq:uz}), whose condition number, $\kappa_I=cond(\bK_1^T\tmI\bK_1)$, depends on the finite element interpolation (degree, number of elements or their aspect ratio) and the Lam\'e parameters $\lambda$ and $\mu$.

In the numerical results presented in Section \ref{s:num2}, we show numerically the dependence of $\kappa_I$ on some relevant numerical parameters. We point out here that, as expected, $\kappa_I$ depends on the differences in element sizes within a problem, but it is independent of homogeneous variations of the element size $h$, since all the components in matrix $\bK_1$ and the eigenvalues are equally affected by $h$.

For a real symmetric square matrix $\bK$, we have that $cond(\bK^T\bK)=cond(\bK)^2$. In our case though, $\bK_1$ is a rectangular matrix with $n$ rows and $n_1$ columns, while $\tmI\bK_1$ is equal to $\bK_1$ but with $m$  rows equal to $0$. The full matrix of the direct problem is $\bK=[\bK_0\ \bK_1]$, whose condition number is equal to $\kappa_D=cond(\bK)$. We show in our numerical results that $\kappa_D <\kappa_I << \kappa_D^2$, that is, the condition number of the inverse problem is larger than the condition number of the direct problem, but does not worsen significantly for the examples tested, with up to $32000$ elements.

The condition number $\kappa_D$, and therefore also $\kappa_I$, depend on the number of elements. More importanty, $k_I$ also depends on the factor $m/n$. Indeed, as $m/n$ decreases, with $m\le n_0$, the matrix of the inverse problem resembles a direct problem, but with a matrix $\bK_1^T\tmI\bK_1$ that approaches $\bK^T\bK$. Therefore, we have the following relation:
\begin{equation*}
\lim_{m/n\rightarrow 0} \kappa_I=\kappa_D^2,
\end{equation*}
and in all the cases tested, we have that $\kappa_I<\kappa_D^2$.

\red{
The condition number of the normal equations affect in turn the stability  of the solution $\bsmu_1$ and $\bmt$. From the normal equation in \eqref{eq:uz}, and for a given perturbation $\delta\bsmu_0$ on the measured displacements $\bsmu_0$, the perturbation on the retrieved displacements $\delta\bsmu_1$ may be bounded as,}
\red{\begin{align}\label{eq:pertU}
\frac{||\delta\bsmu_1||_2}{||\delta\bsmu_0||_2}\le \kappa_I ||\bK_0||_2\frac{||\bsmu_1||_2}{||\hat \bA^T\bK_0\bsmu_0||_2}.
\end{align}}

\red{This relation follows from the fact that for any system with the form $\bmB\bmx=\bmb$, it can be shown that $ ||\delta\bmx||_2/||\delta\bmb||_2\le \kappa_B ||\bmx||_2/\bmb||_2$, with $\kappa_B$ the condition number of matrix $\bmB$, and by also using the relations,
\[
||\hat \bA^T\bK_0\delta\bsmu_0||_2\le||\bK_0\delta\bsmu_0||_2\le ||\bK_0||_2 ||\delta\bsmu_0||_2.
\]}

\red{A similar bound can be deduced from equation \eqref{eq:uzb}, which reads,}
\red{\begin{align*}
\frac{||\delta\bmt||_2}{||\delta\bsmu_0||_2}\le \kappa_M ||\bK_0||_2\frac{||\bmt||_2}{||\hat \bA^T\left(\bI-\bK_1\left(\hat\bA^T\bA\right)^{-1}\bK_1^T\tilde\bI_0\right)\bK_0\bsmu_0||_2},
\end{align*}}
\red{with $\kappa_M$ the condition number of the mass matrix $\bmM$. When $m=n_0$, we have that $\bsmu_1$ is absent and thus the previous relation simplifies to,}
\red{\begin{align}\label{eq:pertT}
\frac{||\delta\bmt||_2}{||\delta\bsmu_0||_2}\le \kappa_M ||\bK_0||_2\frac{||\bmt||_2}{||\hat \bA^T\bK_0\bsmu_0||_2}.
\end{align}}
 
\red{The bounds in \eqref{eq:pertU} and \eqref{eq:pertT} show that the condition numbers $\kappa_I$ and $\kappa_M$ determine the stability of the solution. The former may have a more detrimental effect, since it may attain the value $\kappa_D^2$. However, as noted before, in the cases tested we have that $\kappa_I<<\kappa_D^2$, and therefore the solution remains stable with respect to the noise in the measured displacements $\delta\bsmu_0$, as it also numerically verified in Example \ref{s:exp}. 
}

\subsection{Discussion and Implementation}

The results in Proposition \ref{p1} neither depend on the order of interpolation nor on its continuity, but just on the relative dimensions between the displacement and traction dofs. Therefore, these results are equally valid for other element types, as far as both the displacements and tractions are nodally interpolated. In fact, we note that the result in Proposition \ref{p1} does not carry over to the  situation when the traction field is considered as a piece-wise constant traction field. Although we do not study this case here, we just mention that for elementally interpolated tractions, the equivalent normal equations would yield a non unique solution if $m>n$, with $m$ the number of elemental dofs.

Proposition \ref{p1} allows us to state that the necessary condition $m\le n$ is not sufficient for obtaining a unique solution, and conclude that,

\begin{itemize}
\item Rank-deficient problems may be rendered full-rank by adding new measured displacements observations, or removing some of the nodal tractions, that is, by increasing $n_0$ or decreasing $m$. 
\item The problem that searches optimal nodal traction field  $\bmt$ on the same nodes where the displacement has been measured has a unique solution, since in this case $m=n_0$.
\end{itemize}

We note that from the definitions of the functionals in (\ref{J0h}), and assuming that the space $U^h=U_0^h\oplus U_1^h$ is constant, but with the partitions $U_0^h$ and $U_1^h$ changing dimensions, that is, keeping the mesh and $n$ constant, but changing $n_1$ and $n_0$, the following inequalities hold:
\begin{eqnarray}\label{eq:ineq}
\!\!\!\!
0\le\min_{\bmt,\bar\bsmu_1} J_{0}^h(\bmt,\bar\bsmu_1)\le\min_{\bmt,\bsmu_1} J_{0}^h(\bmt,\bsmu_1)\le \min_{\bmt, n_1=0} J_0^h(\bmt),\quad U^h_1\subseteq \bar U_1^h.
\end{eqnarray}

These relations open the possibility to design adaptive strategies, with the aim of
\begin{itemize}
\item  reducing the error of the least-squares problem, that is, the measure $J^h_0$ of the solution in the optimal inverse problem while keeping the mesh fixed.
\item reducing the error between the discrete and the analytical solution, that is, the norm of the difference between the discrete solution and the analytical solution, $||\bu^h-\bu||=(\bu^h-\bu,\bu^h-\bu)^{1/2}$.
\end{itemize}

\vspace{2ex}
\fbox{
\begin{minipage}[!htb]{0.9\textwidth}
\textbf{Box 1}. Solution algorithm for FEM based TFM.
\begin{itemize}
 \item \textbf{Step 1}. Build matrices $\bK_0, \bK_1$. Compute scalars $m$, $n_0$ and $n_1$.
\item \textbf{Step 2}. Compute matrices $\hat\bA$ and $\bmM$.
 \item \textbf{Step 3.}
 \begin{itemize}
  \item If $m > n_0$:
\begin{itemize}
\item[]  \textbf{Step 3.1.} Compute pseudo-inverse $(\hat\bA^t\hat \bA)^\dagger$ and solve (\ref{eq:uzp})-(\ref{eq:uzbp}).
\end{itemize}
  \item Else:
\begin{itemize}
\item[]   \textbf{Step 3.2.} Solve system in (\ref{eq:uzb})-(\ref{eq:uz}).
\end{itemize}
 \end{itemize} 
 \end{itemize} 
\end{minipage}}

\vspace{2ex}

In view of Proposition \ref{p1}, we can reduce the value of $J^h_0$ by increasing the ratio $m/n_0\le 1$. However, even in the case $m=n_0$, when the discrete equilibrium equations are exactly satisfied, the discrete solution $\bu^h$ may be too inaccurate with respect to the analytical solution $\bu$. For this reason, adaptive strategies for reducing the error $||\bu^h-\bu||$ should be envisaged. We will not apply these techniques here, but we point out that such strategies for inverse problems can be found for instance in \red{\cite{asadzadeh10,li11b,xu15}}, while other \emph{a posteriori} strategies for elasticity problems \cite{ainsworth00} could be used once the discrete solution $\bu^h$ is computed. 

In the numerical results given in the next section, we have applied the solution algorithm given in Box 1, with the spaces $U^h$ and $V^h$ specified in Section \ref{s:df}. The computation of pseudo-inverse matrix becomes necessary in Step 3.1. In this case, the regularisation of the inverse problem may become computationally more efficient than computing the pseudo-inverse. In these situations though, and according to the results deduced, it is advised to change the interpolation of the traction field or the displacement field in order to avoid regularising the inverse problem or computing the pseudo-inverse. In our numerical examples, the latter has been computed by retrieving the singular value decomposition of the system matrix (command {\tt svd} in Matlab).

\section{Numerical results}\label{s:res}

In all the examples tested here we have used a material with Young modulus $E=3000$ and Poisson ratio $\nu=0.3$ (Lam\'e constants $\lambda=1730.8$ and $\mu=1153.8$). The solution of the inverse problem has been implemented in Matlab 2013a. 
 
\subsection{Toy problem}

We have verified the previous results with a test problem on the domain $\Omega=\{(x,y,z) |0\le x,y\le 3, 0\le z\le 1\}$, with a $3\times 3$ mesh on the $x-y$ plane, and using one and two divisions along the height of the gel ($N_z=1,2$, see Figure \ref{f3}). This problem is too small to attract any  practical interest, but it is used here in order to verify the results in Proposition \ref{p1}.

Table \ref{t} summarises the 12 situations analysed. For each row, Table \ref{t} gives the dimension of the measured displacements $\bsmu_0\in\Rset^{n_0}$, and the dimension of the displacement computed through the inverse problem, $\bsmu_1\in\Rset^{n_1}$. The different values have been obtained by using different number of layers ($N_z=1,2$), prescribing some of the displacements, or additionally prescribing the vertical tractions $T_z$ on the top layer. The displacements were generated randomly, but are the same for all the cases considered. 

\begin{figure}[!htb]
\centerline{
\includegraphics[width=0.4\textwidth]{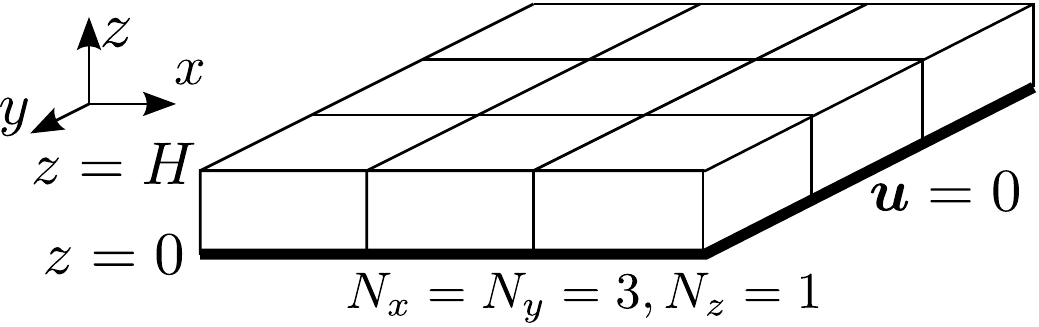}\hspace{6ex}
\includegraphics[width=0.4\textwidth]{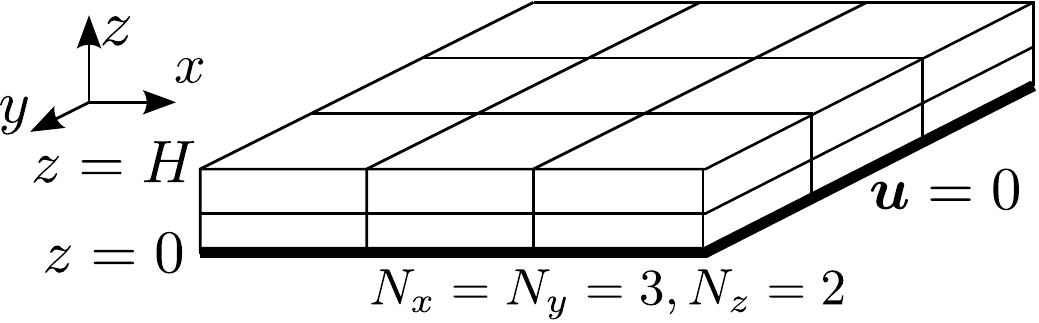}}
\caption{Toy problem on a $3\times 3$ grid with one (left) and two layers (right).}
\label{f3}
\end{figure}

The values in Table \ref{t} of the functional $J^h_0$ and the condition number $\kappa_I$  of the system being solved comply with the results in Proposition \ref{p1}. In the cases where the non-regularised inverse problem yields a singular matrix (indicated with $\kappa_I=\infty$), the value of $J^h_0$ has been computed from the solution of the pseudo-inverse (Step 3.1 in Box 1).  When $n_1=0$, we give the value  $\kappa_I=1$, since just the nodal values $\hat\bmt=\bmM\bmt$ are computed, and thus no system of equations is actually solved. When $n_1>0$, the values of $\kappa_I$ reported in Table \ref{t} correspond to the partitioned form (\ref{eq:uzb})-(\ref{eq:uz}). We note that the condition number of the equivalent  non-partitionned system with $n_0 + n_1$ unknowns,
\begin{equation*}
\left[\begin{array}{cc}
\hat\bA^T\hat\bA & -\hat\bA^T\bK_1\\
-\bK_1^T\hat\bA & \bK_1^T\bK_1
\end{array}
\right]
\left\{\begin{array}{c}
\bmM\bmt\\
\bsmu_1
\end{array}
\right\}
=\left\{\begin{array}{c}
\hat\bA^T\bK_ 0\bsmu_0\\
-\bK_1^T\bK_0\bsmu_0
\end{array}
\right\}.
\end{equation*}
oscillates between $2E7$ (case $b$) and $7E10$ (case $f$). These values are significantly higher than the condition number reported in Table \ref{t}, which highlights the advantage of solving the partitioned system of equations.

We remark that in all the cases where the inverse problem has full rank, and when the displacements $\bsmu_0$ are obtained from a direct FE problem, the tractions that produced them are fully recovered, that is, $J^h_0=0$. If the displacements are instead randomly generated, as it is the case in the results in Table \ref{t}, the optimal values of the functional $J^h_0$ are those indicated in the  table  (using always the same random displacements). 

When $n_1$ increases, and for constant spaces $U^h_0\oplus U_1^h$, as it occurs in cases $c-f$ and $i-l$ (due to constant boundary conditions and number of layers), $J^h_0$ decreases, in agreement with the inequalities in (\ref{eq:ineq}). In addition, when $n_1$ increases, with $n_0$ constant (see cases $a$ and $e$), the value of $J^h_0$ diminishes. This trend shows that the error in the mechanical equilibrium is reduced as $n_1$ increases, even if no traction field satisfying the discrete inverse problem exists. The evolution of this error is analysed further in the next section.

\begin{table}[!htb]
\centering
\begin{small}
\begin{tabular}{c c c c| c| c|c|c|c|}
\cline{3-7}
&&\multicolumn{2}{|c|}{\rule[-0.5ex]{0cm}{0.5cm}Tractions}
&\multicolumn{3}{|c|}{\rule[-0.5ex]{0cm}{0.5cm}Displacements}   \\
\hline
\mc{Case} & \mc{$N_z$} & \mc{$T_z$}&$m$&\rule[-0.5ex]{0cm}{0.5cm} $n_0$ & $n_1$ &$U_1$ & $\kappa_I{}^{*}$ & $J^h$ \\
\hline
\mc{$a$}&\mc{1\phantom{${}^*$}}&\mc{0} &32&48& 0& $\emptyset$     & $1$ & 0.918\\
\mc{$b$}&\mc{1\phantom{${}^*$}}&\mc{0} &32&32&16& $\bsmu_z$       & $50$ & 0 \\
\hline
\mc{$c$}&\mc{2\phantom{${}^*$}}&\mc{0} &32&96& 0& $\emptyset$     & $1$ & 5.583\\
\mc{$d$}&\mc{2\phantom{${}^*$}}&\mc{0} &32&80&16& $\bsmu_z$       & $260$ & 4.099\\
\mc{$e$}&\mc{2\phantom{${}^*$}}&\mc{0} &32&48&48& $\bsmu_m$       & $620$ & 0.789\\
\mc{$f$}&\mc{2\phantom{${}^*$}}&\mc{0} &32&32&64& $\bsmu_{z,m}$   & $3E3$ & 0 \\
\hline
\hline
\mc{$g$}&\mc{1\phantom{${}^*$}}&\mc{unk} &48&48&0& $\emptyset$    & $1$ & 0\\
\mc{$h$}&\mc{1\phantom{${}^*$}}&\mc{unk} &48&32&16& $\bsmu_z$     & $\infty$ & 0\\
\hline
\mc{$i$}&\mc{2\phantom{${}^*$}}&\mc{unk} &48&96& 0& $\emptyset$   & $1$ & 5.239\\
\mc{$j$}&\mc{2\phantom{${}^*$}}&\mc{unk} &48&80&16& $\bsmu_z$     & $260$ & 3.603\\
\mc{$k$}&\mc{2\phantom{${}^*$}}&\mc{unk} &48&48&48& $\bsmu_m$     & $620$ & 0\\
\mc{$l$}&\mc{2\phantom{${}^*$}}&\mc{unk} &48&32&64& $\bsmu_{z,m}$ &  $\infty$ & 0\\
\hline
\end{tabular}

\end{small}
\vspace{1ex}
\caption{Results for the toy problem: substrate with $3\times 3$ divisions on the $x-y$ plane, and $N_z=1$ or $2$ divisions along $z$. $J^h=$ value of the optimal solution of the functional in equation (\ref{J0h}). In cases \emph{a-f}, the condition $T_z=0$ is assumed, while in cases \emph{g-l} component $T_z$ is unknown and is found using inverse analysis.
$^{*}$ $\kappa_I$ has been computed using the Matlab  function {\tt cond}. The case $\kappa=\infty$ means $\kappa>1E24$, in which case the pseudo-inverse was computed (Step 3.1 in Box 1).
}
\label{t}
\end{table}

\subsection{Analysis of condition number and error}\label{s:num2}

We will here evaluate the conditioning of the matrix in the inverse problem $\bK_1^T\tmI\bK_1$ using the same geometry given in Figure \ref{f3} but with different number of elements and boundary conditions. In order to not taking into account the dependence on the element aspect ratio, which would affect the condition number of the associated direct problem $\kappa_D=cond([\bK_0\ \bK_1])$ and thus also affect $\kappa_I=cond(\bK_1^T\tmI\bK_1)$, we have used solely cuboid elements, and adapted the height of the domain $H$ accordingly. 

As mentioned in Section \ref{s:cond}, the condition numbers $\kappa_I$ and $\kappa_D$ are independent of $h$, but they do depend on the number of elements $N_E$ and ratio $m/n$. Figure \ref{f4}a shows the evolution of $\kappa_I$ and $\kappa_D$ for different values of $N_E$, while keeping the ration $m/n$ constant. This is achieved by increasing $N_x$ and $N_y$, but keeping $N_z=constant$. It can be observed that $\kappa_I$ is slightly affected by $N_E$, overall for lower values of $m/n$.

\begin{figure}[!htb]
\centerline{
\subfigure[]{\includegraphics[width=0.55\textwidth]{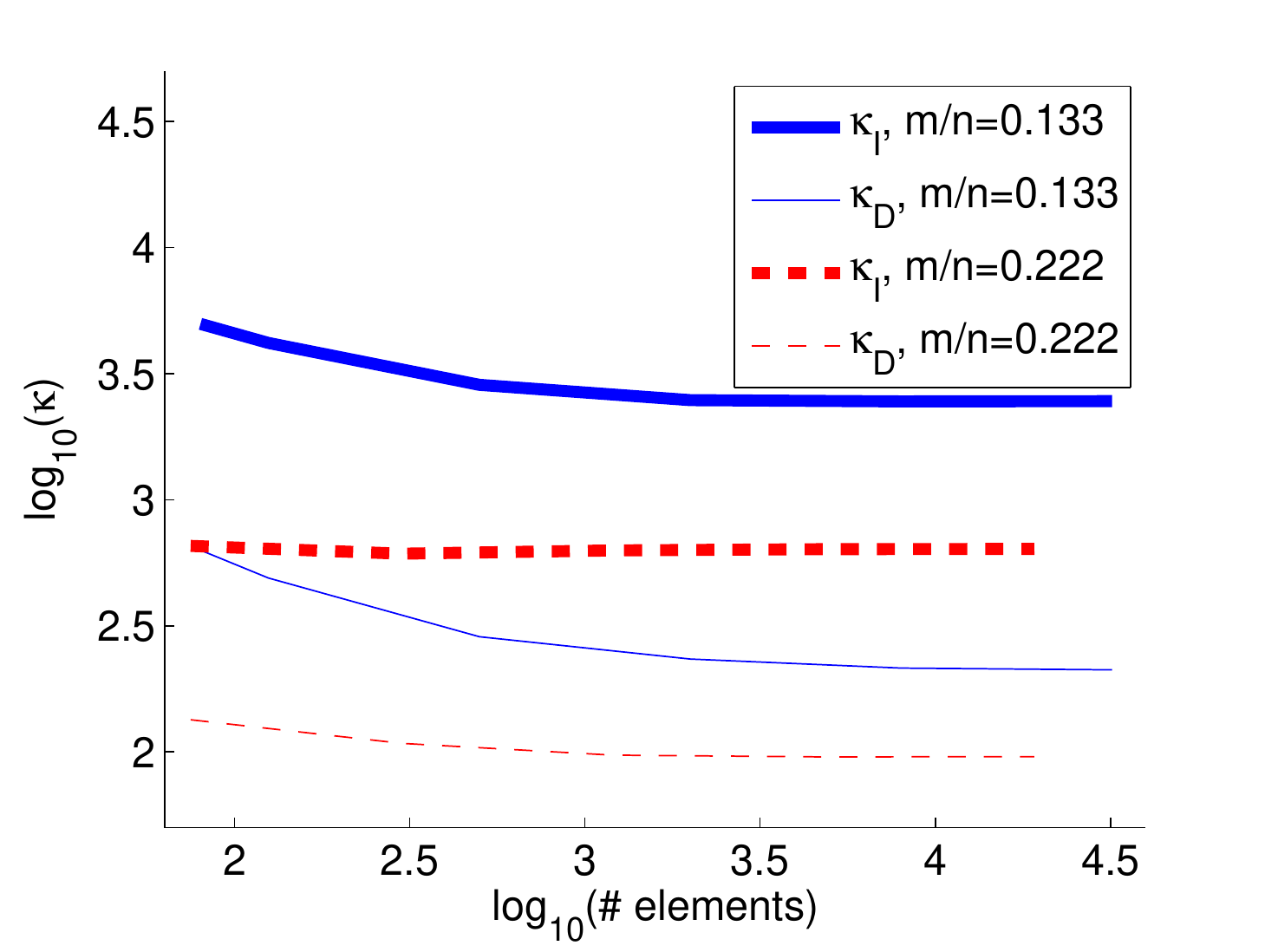}}\hspace{-5ex}
\subfigure[]{\includegraphics[width=0.55\textwidth]{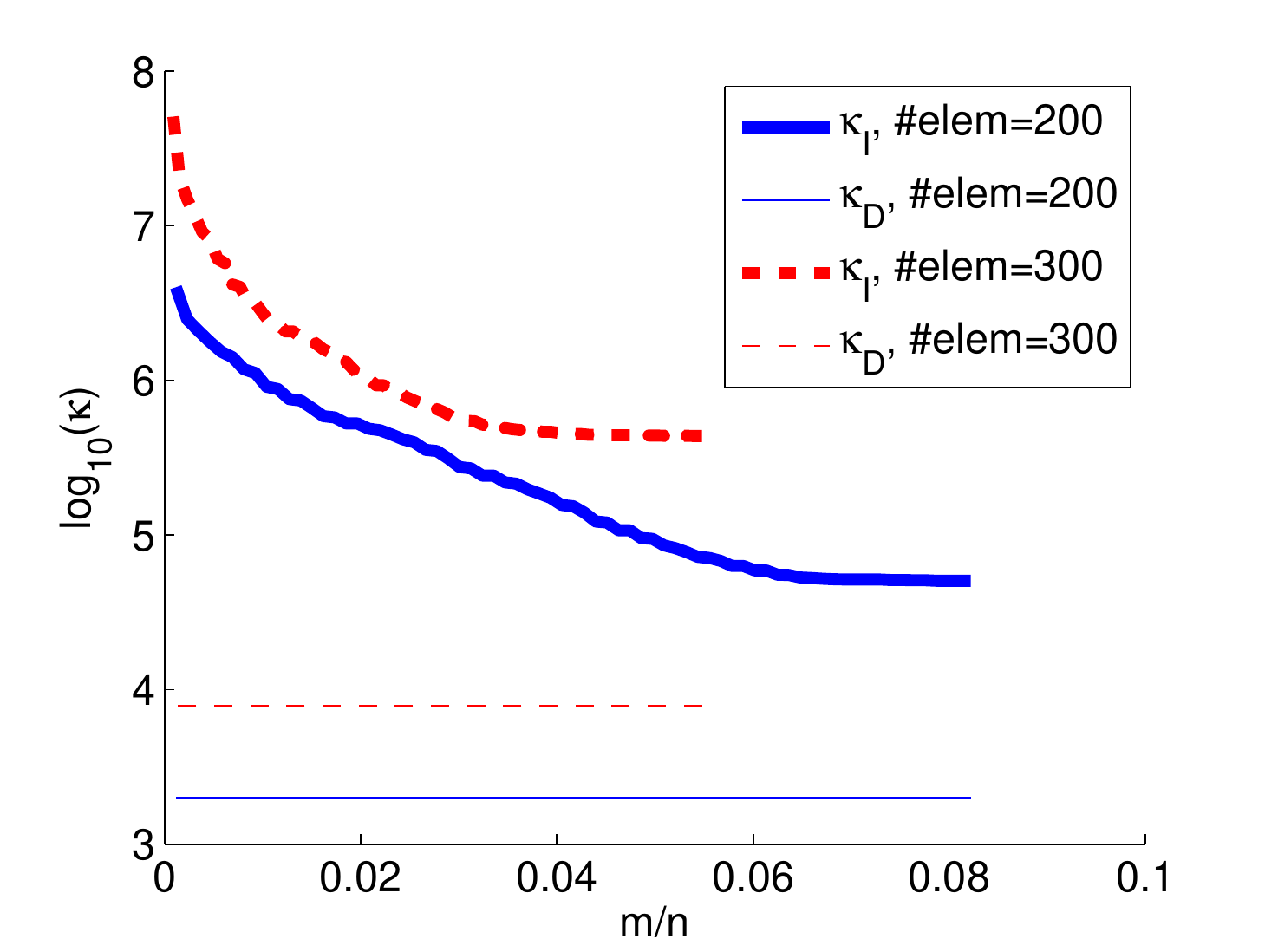}}}
\caption{Evolution of the condition numbers of the inverse and associated direct problem, $\kappa_I$ and $\kappa_D$ respectively, as a function of the number of elements and ratio $m/n$.}
\label{f4}
\end{figure}

Figure \ref{f4}b shows the evolution of the condition numbers for different values of $m/n \approx 1/N_z$, while keeping a constant number of elements $N_E=N_x* N_y *N_z$. We have analysed two sets of problems, one with $N_E=200$ and $N_E=300$ elements. Figure \ref{f4}b shows that $\kappa_I$ is indeed always larger than $\kappa_D$, and that as $m/n$ diminishes, $\kappa_I$ increases towards $\kappa_D^2$. However, the plot also shows that  this upper bound is approached only for very low values of $m/n$ (very heigh and narrow geometries when using cube-like elements). In more general flat-like geometries, we have that if $m/n>0.005$, then $\kappa_I/\kappa_D^2<0.5$, or that if $m/n>0.05$, then $\log_{10}(\kappa_I/\kappa_D)<1.5$, for the two sets of problems analysed.

\begin{figure}[!htb]
\centerline{\includegraphics[width=0.5\textwidth]{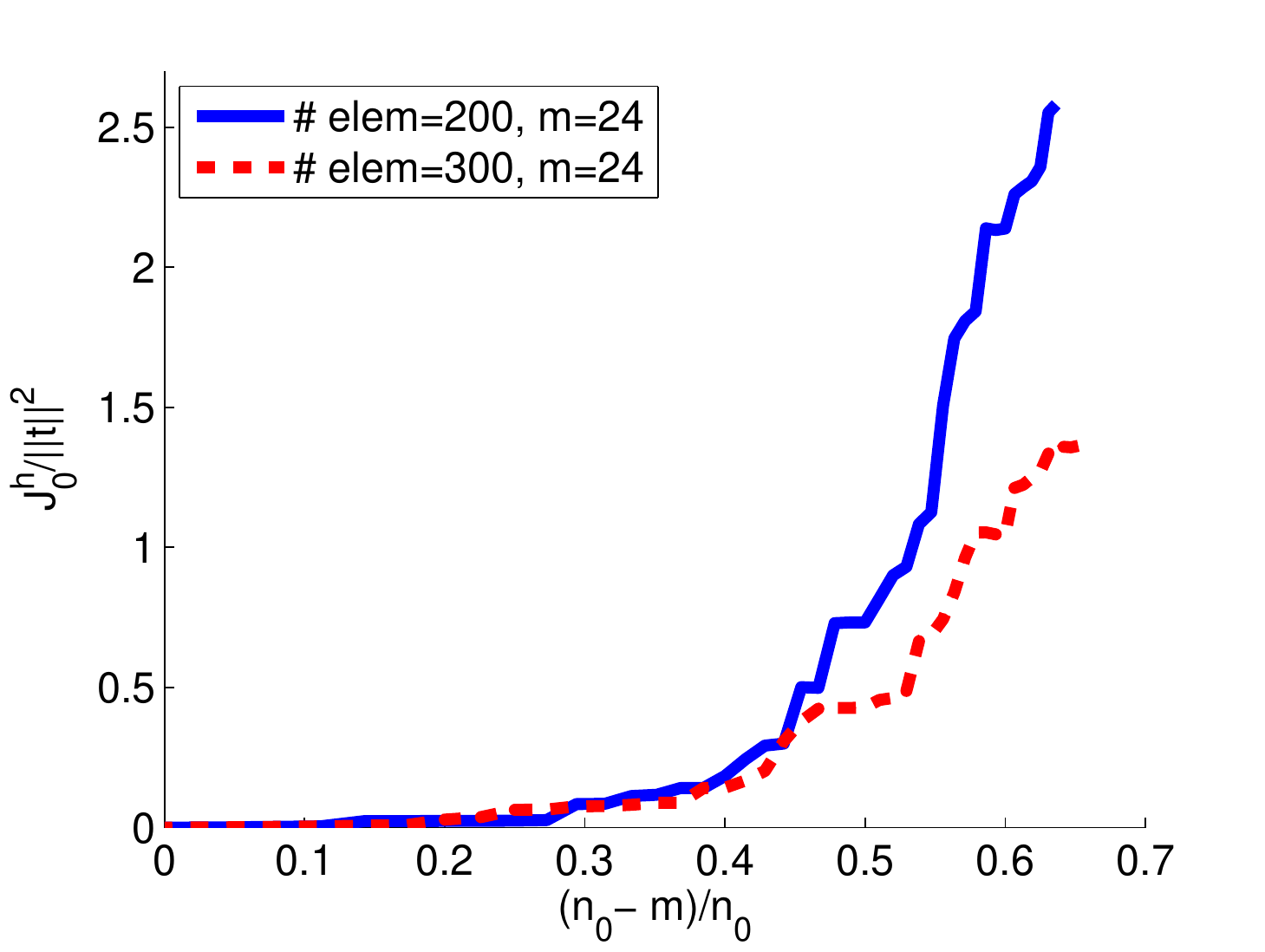}}
\caption{Evolution of the functional $J^h_0$ for different ratios of $(n_0-m)/n_0$}
\label{f5}
\vspace{-2ex}
\end{figure}

We have also measured the error of the discrete inverse problem by inspecting the evolution of the non-dimensional ratio $J_0^h/||\bmt||^2$ with respect to the ratio $(n_0-m)/n_0$, which for problems with a unique solution takes a value between $0$ and $1$. The converge rate is faster than linear, but slightly lower than quadratic. It can be observed in Figure \ref{f5} that $J^h_0$ may become larger than $0.5||\bmt||^2$ whenever $m<0.6n_0$, in which case the traction field becomes too poor with respect to the measured displacement field $\bsmu_0$.

\subsection{Experimental data}\label{s:exp}

We have also tested the algorithm with some real data of Madin Darby Canine Kidney (MDCK) II cells on a gel substrate with dimensions $(x,y)\in[0,55]\times [0,55]$ during wound healing. The displacements have been stored on a $56\times 56$ grid 72 minutes after wounding the tissue.  Figure \ref{f6}a shows the horizontal components of the displacements. Since these have been measured on the $56\times 56$ grid, a mesh with linear elements has been adapted to these locations for simplicity. We stress that if required, an irregular mesh or elements with higher degree could have been equally employed, without altering the methodology.

\begin{figure}[!htb]
\centerline{
\subfigure[]{\includegraphics[width=0.3\textwidth]{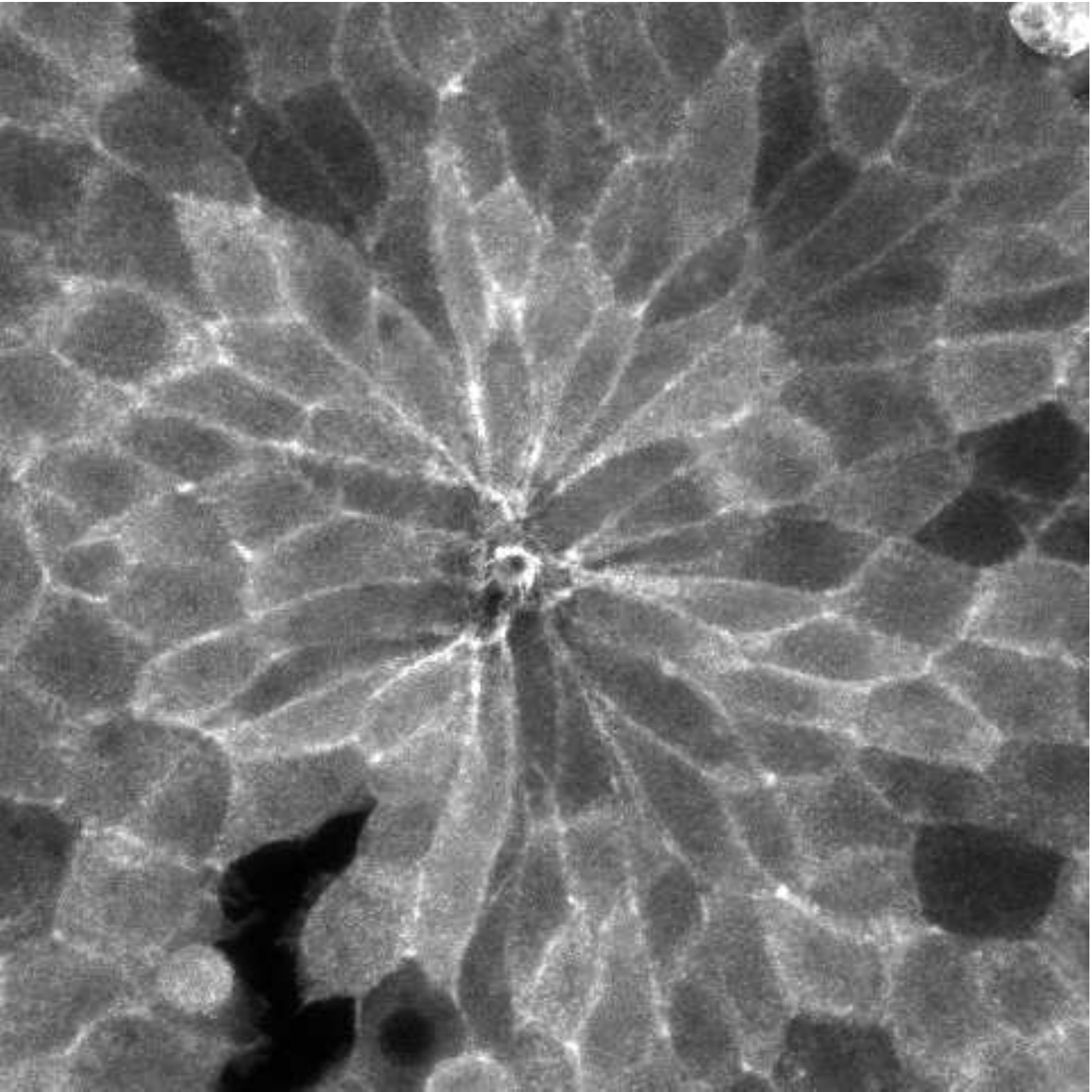}}
\subfigure[]{\includegraphics[width=0.45\textwidth]{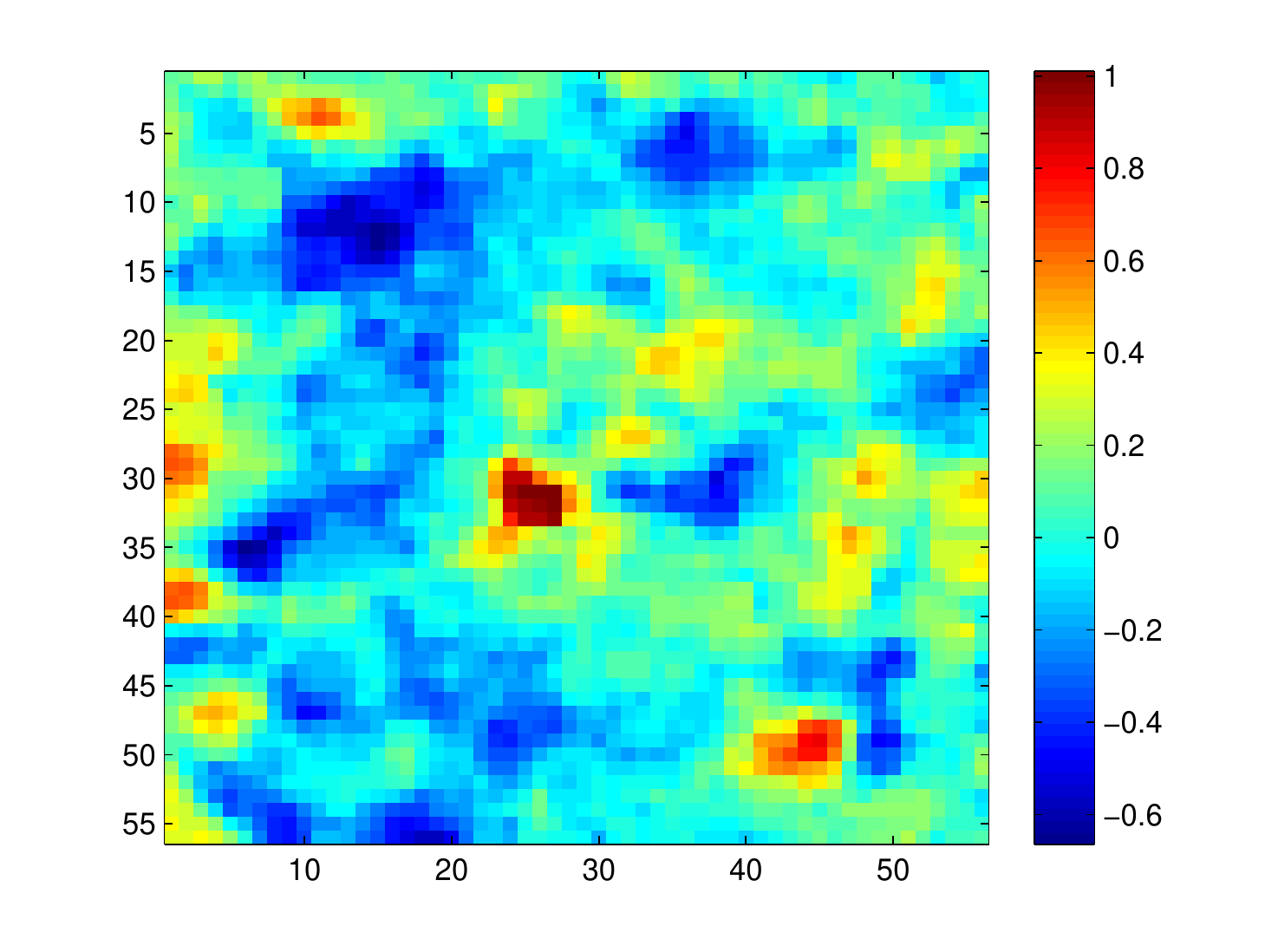}}}
\centerline{
\subfigure[]{\includegraphics[width=0.45\textwidth]{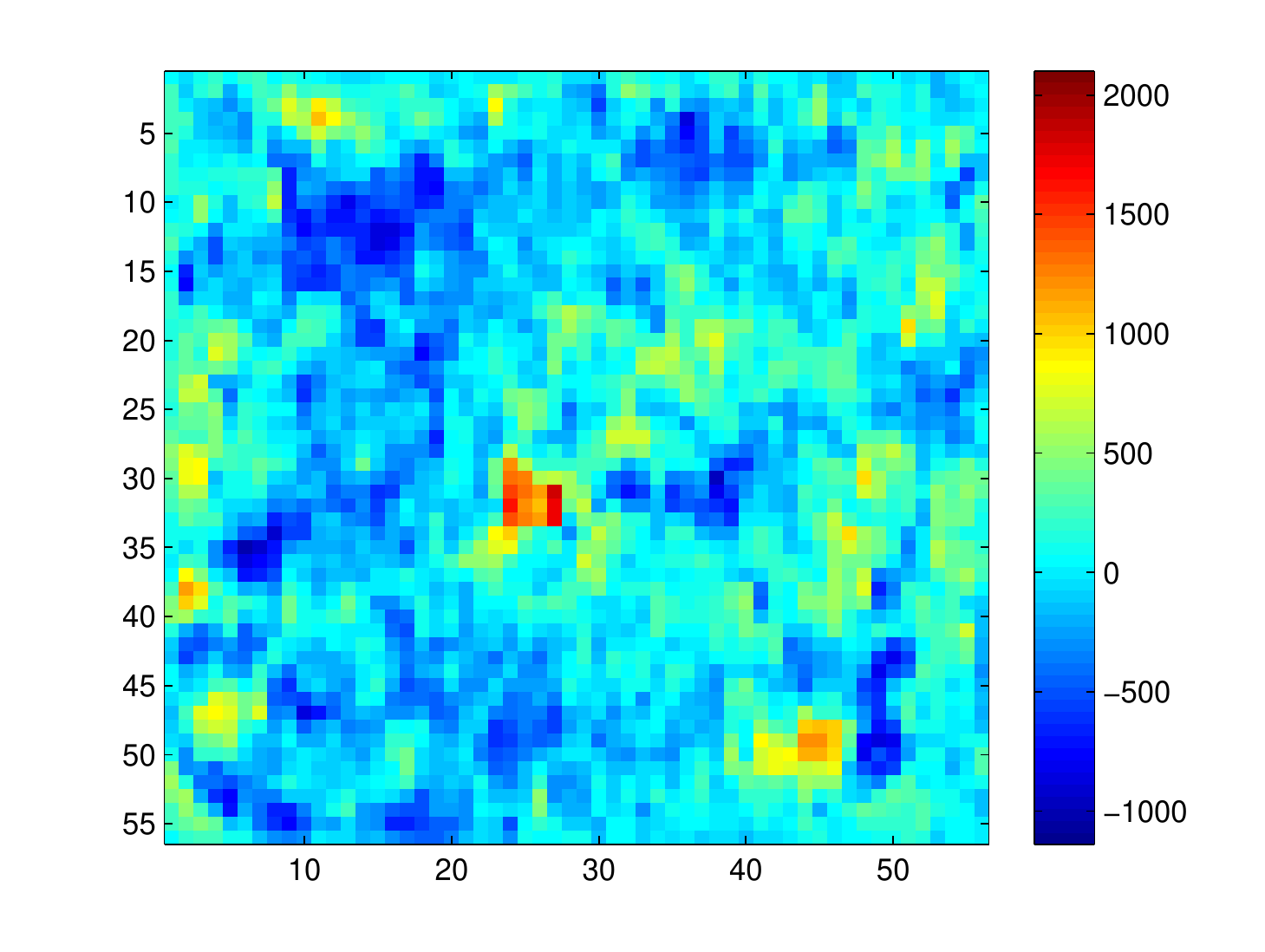}}\hspace{-0ex}
\subfigure[]{\includegraphics[width=0.45\textwidth]{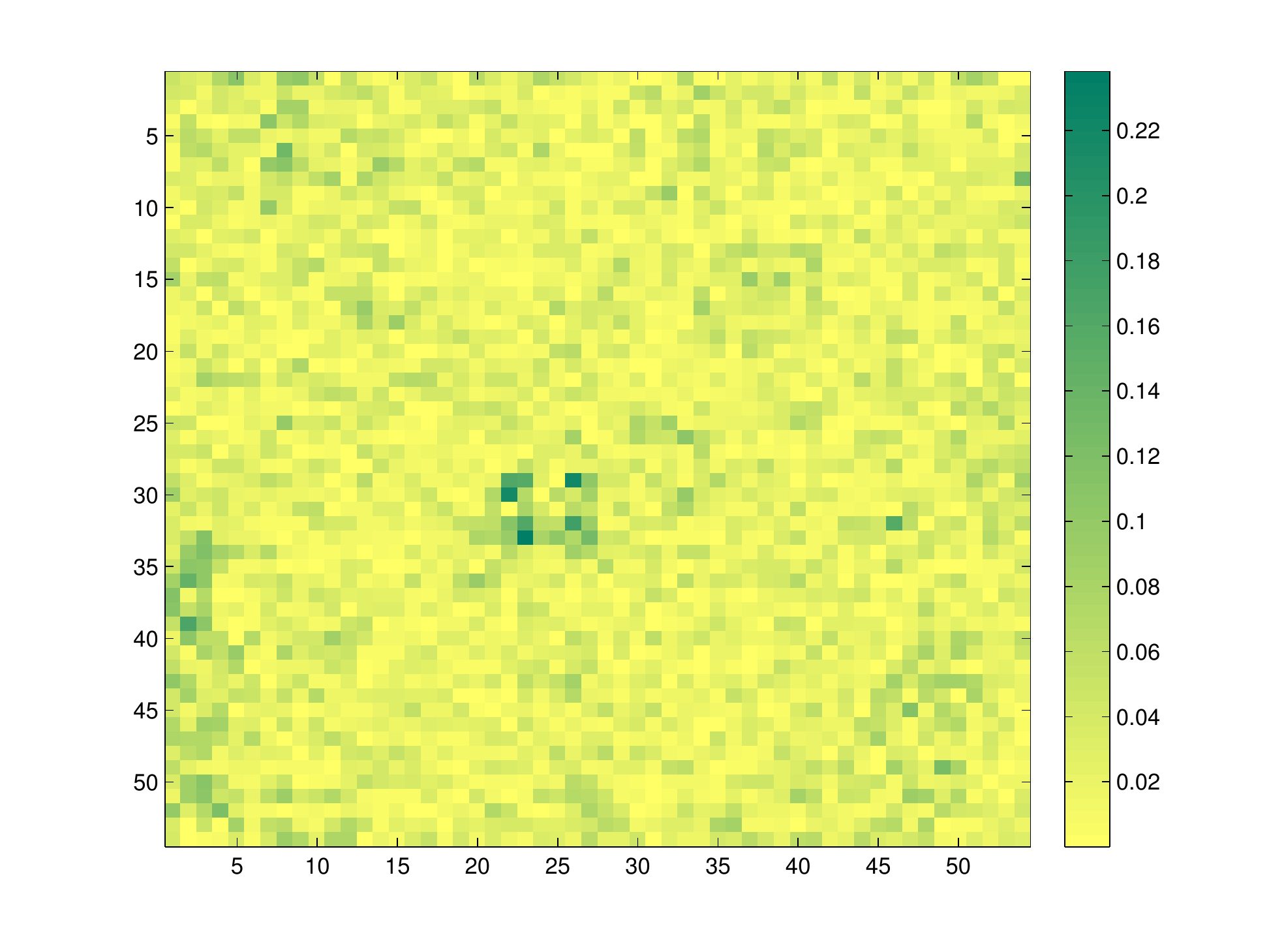}}}
\vspace{-2ex}
\caption{(a) Madin Darby canine kidney (MDCK) II cells 72 minutes after wounding (courtesy of Xavier Trepat \cite{brugues14}). (b) Horizontal displacement field $\bsmu_x$. (c) Horizontal traction field $\bmt_x$ using inverse finite element techniques with $N_z=1$ layer and a nodal traction field  $\bt^h\in T^i$. (d) Relative difference  between Boussinesq solution \cite{trepat09} and FE solution in (c) computed as $\delta_t=|\bmt_{x,Bous}-\bmt_{x,FE}|/\max\bmt_{x,Bous}$. The plot does not include the outer layer of elements. If these are included, the maximum value of $\delta_t$ increases from $0.24$ up to $0.49$.}
\label{f6}
\vspace{-2ex}
\end{figure}

Figure \ref{f6}c shows the resulting traction field resorting to the inverse FE analysis, $\bmt_{FE}$, computed with one layer and $m=n_0$, so that the equilibrium equations were exactly satisfied. Figure \ref{f6}d compares this solution and the Boussinesq solution $\bmt_{Bous}$ of an homogeneous elastic infinite half-plane \cite{landau67,trepat09} by showing the relative difference between the two, computed as  $\delta_t=|\bmt_{x,Bous}-\bmt_{x,FE}|/\max \bmt_{x,Bous}$. We note that the average of this difference is equal to $\bar\delta_t=0.035$, with a maximum value $\delta_{t,max}=0.49$.

The error between the two techniques is mainly due to the different interpolation in the displacements, and the different assumptions on the geometry (semi-infinite versus finite domain) and lateral boundary conditions (contact stresses due to the presence of material in Boussinesq versus zero tractions at the boundary in FE solution). The boundary effects may be reduced if for instance a one element band is excluded, which is where the errors are more pronounced. In this case, the averaged and maximum error are respectively reduced to $\bar\delta_t=0.027$ and $\delta_{t,max}=0.24$.

\red{We have also tested the bounds in \eqref{eq:pertU} and \eqref{eq:pertT} for the present case, by applying the  noise $\delta\bsmu_0$ on the displacements $\bsmu_0$, with $||\delta\bsmu_0||/||\bsmu_0||\approx 2E-8$, and  $\delta\bsmu_0$ a normal distribution with the same mean value than $\bsmu_0$. The resulting values on each side of the bounds are reported in Table \ref{t:pert}. It can be verified that the bounds are satisfied, and that in all cases we have that}

\red{
\[
\frac{||\delta\bmt||}{||\bmt||}\approx \frac{||\delta\bsmu_0||}{||\bsmu_0||}  \quad
;\quad \frac{||\delta\bsmu_1||}{||\bsmu_1||}\approx \frac{||\delta\bsmu_0||}{||\bsmu_0||},
\]}
\red{with $\bsmu_1$ and $\bmt$ the solutions for the unperturbed measure $\bsmu_0$. The computations of tractions and displacements remains thus stable with respect to the applied perturbations.}

\red{ We have also tested the effect of the perturbation for different magnitudes of $||\delta\bsmu_0||/||\bsmu_0||$ and also for different mesh sizes, using meshes with $6050$, $48400$, $163350$ and $387200$ elements, by using uniform element subdivisions. From the values in Table \ref{t:pertM} it can be verified that the relative size of the corresponding perturbed solutions, measured by $||\delta\bsmu_1||/||\bsmu_1||$ and $||\delta\bmt||/||\bmt||$, are not  affected by the element and mesh size.}

\begin{table}[!htb]
\centering
\red{
\begin{tabular}{|c ||c |c ||c| c||c| c| c| }
\hline
$N_z$\phantom{$\Bigg\vert$} & 
$\frac{||\delta\bsmu_1||_2}{\ ||\delta\bsmu_0||_2}$ & {\small Rhs Eq.\eqref{eq:pertU}} &
$\frac{||\delta\bmt||_2}{||\delta\bsmu_0||_2}$ 
& {\small Rhs Eq.\eqref{eq:pertT}}
&  $\frac{||\delta\bsmu_0||}{||\bsmu_0||}$ & $\frac{||\delta\bsmu_1||}{||\bsmu_1||}$ 
& $\frac{||\delta\bmt||}{||\bmt||}$ \\
\hline
\hline
1\phantom{$\big\vert$}  & - & - & 1.15E3 & 1.95E5  &  1.94E-8 & - & 1.05E-8 \\
2\phantom{$\big\vert$} & 5.44E-1 & 1.85E3  & 9.75E2 & 3.36E5 &1.94E-8 & 2.25E-8 &1.12E-8 \\
3\phantom{$\big\vert$} & 8.07E-1 & 2.34E4  & 1.90E3 & 4.91E5 &1.94E-8 & 2.22E-8 &2.27E-8 \\
\hline
\end{tabular}}
\vspace{1ex}
\caption{\red{Verification of bounds in equations \eqref{eq:pertU} and \eqref{eq:pertT} for stability analysis of the experimental tests using $N_x=N_y=55$. }}
\label{t:pert}
\end{table}

\begin{table}[!htb]
\centering
\red{
\begin{tabular}{|c|c|c|| c| c| }
\hline
$N_z$\phantom{$\Bigg\vert$}  & $N_x=N_y$
&  $\frac{||\delta\bsmu_0||}{||\bsmu_0||}$ 
& $\frac{||\delta\bsmu_1||}{||\bsmu_1||}$ 
& $\frac{||\delta\bmt||}{||\bmt||}$ \\
\hline
\hline
2\phantom{$\big\vert$} &55& 1.94E-8 & 2.25E-8 &1.12E-8 \\
2\phantom{$\big\vert$} &55& 1.00E-3 & 8.00E-4 &4.67E-4 \\
2\phantom{$\big\vert$} &55& 1.00E-2 & 8.05E-3 &4.67E-3 \\
2\phantom{$\big\vert$} &55& 1.01E-1 & 8.15E-2 &4.80E-2 \\
\hline
4\phantom{$\big\vert$} &110& 1.00E-3 & 8.12E-4 &1.16E-3 \\
6\phantom{$\big\vert$} &165& 1.00E-3 & 8.25E-4 &2.31E-3 \\
8\phantom{$\big\vert$} &220& 1.00E-3&8.41E-4  &3.61E-3 \\
\hline
\end{tabular}}
\vspace{1ex}
\caption{\red{Numerical stability analysis for the experimental tests using different relative perturbations and mesh sizes. }}
\label{t:pertM}
\end{table}


\section{Conclusions}

This paper gives some simple rules that guarantee that the finite element inverse problem has a unique solution, without resorting to regularisation techniques. Briefly, from the numerical problems tested, the most practical results can be summarised as follows:
\begin{itemize}
\item Use a nodally interpolated traction field $T^h$.
\item Obtain as many tractions  degrees of freedom as observed displacements, i.e. impose $m=n_0$. This ensures a full-rank system and that the equilibrium equations are exactly satisfied.
\item If condition $m=n_0$ is not possible, use $m<n_0$ (less tractions than known displacements), but as a general rule, use $m>0.6n_0$ in order to avoid too large errors in the equilibrium equations.
\item Include unknown nodal displacements $\bsmu_1$, that are also computed through the inverse analysis. The higher the number of unknown displacements, the smaller the error in the equilibrium equations. However, in order to keep the condition number  $\kappa_I$ of the inverse problem not too large, and far below $\kappa_D^2$, with $\kappa_D$ the condition number of the direct problem, it is advised to limit the total number of displacement dofs $n$ according to the relation $m>0.05n$. 
\end{itemize}

We note that while the relation $m\le n_0$ is general, the conditions $0.6n_0<m$ and $0.05n<m$ have been obtained using cubic hexahedral elements, and thus may vary if other aspect ratios and geometries are employed.

Very often, the traction field is computed by resorting to the Boussinesq analytical solution for a linear material \cite{landau67}, and applying the Fourier transform of the solution, which yields a set of uncoupled system of equations. This technique can be applied to those situations where the Green function is known, like an infinite half-plane with infinite thickness \cite{butler02} or with a constant bounded thickness \cite{trepat09,alamo13}. In both cases, the material is assumed linear and homogeneous. The finite element approach presented here, and the results derived, may be also  applied to arbitrary non-homogeneous domains. 

An example of the use of FE techniques in TFM may be found for instance in \cite{legant13}. These references do not exploit the results shown in the present paper, and consequently regularisation was employed. In non-linear elasticity, the conclusions stated here do not necessarily carry over the resulting system of non-linear equations, which requires an iterative process \cite{palacio13}. 

Another common approach in TFM is the so-called direct forward method, which after interpolating the strain field, computes the tractions from the derived stresses as,
\[
\bt=\bssigma(\bu)\bn,
\]
with $\bn$ the external normal of the boundary. This approach may be employed in linear \cite{franck11,hur09,hall13} and non-linear elasticity \cite{toyjanova14}. However, in this method, the derived stress tensor $\bssigma$ does not necessarily satisfies the equilibrium condition $\nabla\cdot\bssigma=\0$ due to the assumed constitutive law of the material and experimental errors when measuring the displacement field. Instead, the traction field obtained from the finite element technique presented here minimises the error of the equilibrium equations, and when $m=n_0$, these are exactly satisfied  (in a weak sense). 
%


\section*{Acknowledgements}

The author is greatly thankful to Xavier Trepat and Vito Conte from the Institute of Bioenginyeria de Barcelona (IBEC), Spain, for their fruitful discussions. The grant DPI2013-43727-R, from the Spanish Ministry of Economy and Competitiveness (MinECo), is financially acknowledged.

\bibliographystyle{plain}


\end{document}